\newtheorem{theorem}{Theorem}
\newtheorem{lemma}{Lemma}
\newtheorem{example}{Example}
\newtheorem{observation}{Observation}
\theoremstyle{definition}
\newcommand{\MAW}{\mathsf{MAW}}
\newcommand{\rle}{\mathsf{rle}}
\newcommand{\bridge}{bridge}
\newcommand{\tRLESA}{\mathsf{tRLESA}}
\newtheorem{proposition}{Proposition}
\theoremstyle{definition}
\newcommand{\shrink}{\mathsf{shk}}
\newcommand{\Exp}{\mathsf{Exp}}
\newcommand{\Expplus}{\mathsf{Exp}_{\texttt{+}}}
\title{Minimal Absent Words on Run-Length Encoded Strings}
\author{Tooru~Akagi$^{1}$}
\author{Kouta~Okabe$^{2}$}
\author{Takuya~Mieno$^{3}$}
\author{Yuto~Nakashima$^{1}$}
\author{Shunsuke~Inenaga$^{1,4}$}
\affil{
  \normalsize{
  \textit{$^1$Department of Informatics, Kyushu University, Japan}\\
  \texttt{\{toru.akagi, yuto.nakashima, inenaga\}@inf.kyushu-u.ac.jp}\\
  \textit{$^2$Department of Information Science and Technology, Kyushu University, Japan}\\
  \texttt{okabe.kouta.966@s.kyushu-u.ac.jp}\\
  \textit{$^3$Faculty of Information Science and Technology, Hokkaido University, Japan\footnote{Current affiliation: University of Electro-Communications, Japan~(\texttt{tmieno@uec.ac.jp})}} \\
  \texttt{takuya.mieno@ist.hokudai.ac.jp}\\
  \textit{$^4$PRESTO, Japan Science and Technology Agency, Japan}
  }
}
\date{}
\begin{document}
\maketitle

\begin{abstract}
  A string $w$ is called a \emph{minimal absent word} for another string $T$
  if $w$ does not occur (as a substring) in $T$ and any proper substring of $w$ occurs in $T$.
  State-of-the-art data structures for reporting the set $\MAW(T)$ of MAWs from a given string $T$ of length $n$ require $O(n)$ space, can be built in $O(n)$ time, and can report all MAWs in $O(|\MAW(T)|)$ time upon \tmnote*{added}{a} query.
  This paper initiates the problem of computing MAWs from \tmnote*{added}{a} compressed representation of a string.
  In particular, we focus on the most basic compressed representation of a string,
  \emph{run-length encoding} (\emph{RLE}), which represents
  each maximal run of the same characters $a$ by $a^p$ where $p$ is the length of the run.
  Let $m$ be the RLE-size of string $T$.
  After categorizing the MAWs into five disjoint sets $\mathcal{M}_1$, $\mathcal{M}_2$, $\mathcal{M}_3$, $\mathcal{M}_4$, $\mathcal{M}_5$ using RLE,
  we present matching upper and lower bounds for the number of MAWs in $\mathcal{M}_i$ for $i = 1,2,4,5$ in terms of RLE-size $m$, except for $\mathcal{M}_3$ whose size is unbounded by $m$.
  We then present a compact $O(m)$-space data structure that can report
  all MAWs in optimal $O(|\MAW(T)|)$ time.
\end{abstract}

\section{Introduction}

An \emph{absent word} (a.k.a. \emph{a forbidden word}) for a string $T$
is a non-empty string that is \emph{not} a substring of $T$.
An absent word $X$ for $T$ is said to be a \emph{minimal absent word} (\emph{MAW})
for $T$ if all proper substrings of $X$ occur in $T$.
MAWs are combinatorial string objects, and their interesting
mathematical properties have extensively been studied in the literature (see \cite{Beal1996MAWandSymbolicDynamics,Crochemore1998MAWdefinition,Fici2006MAWapplication,CrochemoreHKMPR20,MienoKAFNIBT20,abs-2105-08496} and references therein).
MAWs also enjoy several applications including
phylogeny~\cite{Chairungsee2012PhylogenyByMAW},
data compression~\cite{Crochemore2000DCA,crochemore2002improved,AyadBFHP21},
musical information retrieval~\cite{CrawfordB018},
and bioinformatics~\cite{Almirantis2017MolecularBiology,Charalampopoulos18,pratas2020persistent,koulouras2021significant}.

Thus, given a string $T$ of length $n$ over an alphabet of size $\sigma$,
computing the set $\MAW(T)$ of all MAWs for $T$ is an interesting and important problem:
Crochemore et al.~\cite{Crochemore1998MAWdefinition} presented
the first efficient data structure of $O(n)$ space
which outputs all MAWs in $\MAW(T)$ in $O(\sigma n)$ time and $O(n)$ working space.
Since the number $|\MAW(T)|$ of MAWs for $T$ can be as large as $O(\sigma n)$
and there exist strings $S$ for which $|\MAW(S)| \in \Omega(\sigma |S|)$~\cite{Crochemore1998MAWdefinition},
Crochemore et al.'s algorithm~\cite{Crochemore1998MAWdefinition} runs in optimal time in the worst case.
Later, Fujishige et al.~\cite{Fujishige2016DAWG} presented an improved
data structure of $O(n)$ space,
which can report all MAWs in $O(n+|\MAW(T)|)$ time and $O(n)$ working space.
Fujishige et al.'s algorithm~\cite{Fujishige2016DAWG} can easily be modified so
it uses $O(|\MAW(T)|)$ time for reporting all MAWs,
by explicitly storing all MAWs when $|\MAW(T)| \in O(n)$.
The key tool used in these two algorithms is an $O(n)$-size automaton
called the \emph{DAWG}~\cite{BlumerBHECS85},
which accepts all substrings of $T$.
The DAWG for string $T$ can be built in $O(n \log \sigma)$ time for general ordered alphabets~\cite{BlumerBHECS85}, or in $O(n)$ time for integer alphabets of size polynomial in $n$~\cite{Fujishige2016DAWG}.
There also exist other efficient algorithms for computing MAWs
with other string data structures such as suffix arrays and Burrows-Wheeler transforms~\cite{Belazzougui2013ESA,Barton2014MAWbySA}.
MAWs in other settings have also been studied in the literature,
including length specified versions~\cite{charalampopoulos2018extended},
the sliding window versions~\cite{CrochemoreHKMPR20,MienoKAFNIBT20,abs-2105-08496},
circular string versions~\cite{FiciRR19},
and labeled tree versions~\cite{FiciG19}.

In this paper, we initiate the study of computing MAWs for \emph{compressed} strings.
As the first step of this line of research,
we consider strings which are compactly represented by \emph{run-length encoding} (\emph{RLE}).
Let $m$ be the size of the RLE of an input string $T$.
We first categorize the elements of $\MAW(T)$ into five disjoint subsets
$\mathcal{M}_1$, $\mathcal{M}_2$, $\mathcal{M}_3$, $\mathcal{M}_4$, and $\mathcal{M}_5$, by considering how the MAWs can be related to the boundaries of maximal character runs in $T$ (Section~\ref{sec:prelim}).
In Section~\ref{sec:upper} and Section~\ref{sec:lower},
we present matching upper bounds and lower bounds for their sizes $|\mathcal{M}_i|$~($i = 1, 2, 4, 5$) in terms of the RLE size $m$ or the number $\sigma'_T$ of distinct characters occurring in $T$.
Notice that $\sigma'_T \leq m$ always holds.
The exception is $\mathcal{M}_3$, which can contain $\Omega(n)$ MAWs regardless of the RLE size $m$.
Still, in Section~\ref{sec:representation}
we propose our RLE-compressed $O(m)$-space data structure that can enumerate all MAWs for $T$ in output-sensitive $O(|\MAW(T)|)$ time.
Since $m \leq n$ always holds,
our result is an improvement over 
Crochemore et al.'s and Fujishige et al.'s results both of which require $O(n)$ space to store representations of all MAWs.
Charalampopoulos et al.~\cite{charalampopoulos2018extended} showed how one can use \emph{extended bispecial factors} of $T$ to represent all MAWs for $T$ in $O(n)$ space, and to output all MAWs in optimal $O(|\MAW(T)|)$ time upon a query.
While the way how we characterize the MAWs 
may be seen as the RLE version of their method based on the extended bispecial factors,
our $O(m)$-space data structure cannot be obtained by a straightforward extension from~\cite{charalampopoulos2018extended}, 
since there exists a family of strings over a constant-size alphabet for which
the RLE-size is $m \in O(1)$ but $|\MAW(T)| \in \Omega(n)$.
We note that, by the use of \emph{truncated RLE suffix arrays}~\cite{TamakoshiGIBT15},
our $O(m)$-space data structure can be built in $O(m \log m)$ time with $O(m)$ working space (the details of the construction will be presented in the full version of this paper).

\section{Preliminaries}
\label{sec:prelim}

\subsection{Strings}
Let $\Sigma$ be an ordered alphabet.
An element of $\Sigma$ is called a character.
An element of $\Sigma^*$ is called a string.
The length of a string $T$ is denoted by $|T|$.
The empty string $\varepsilon$ is the string of length 0.
If $T = xyz$, then $x$, $y$, and $z$ are called
a \emph{prefix}, \emph{substring}, and \emph{suffix} of $T$, respectively.
They are called a \emph{proper prefix}, \emph{proper substring},
and \emph{proper suffix} of $T$ if $x \neq T$, $y \neq T$, and $z \neq T$,
respectively.
For any $1 \le i \le |T|$, the $i$-th character of $T$ is denoted by $T[i]$.
For any $1 \le i \le j \le |T|$, $T[i..j]$ denotes
the substring of $T$ starting at $i$ and ending at $j$.
For any $i \le |T|$ and $1 \le j$, let $T[..i] = T[1..i]$ and $T[j..] = T[j..|T|]$.
We say that a string $w$ \emph{occurs} in a string $T$
if $w$ is a substring of $T$.
Note that by definition, the empty string $\varepsilon$
is a substring of any string $T$
and hence $\varepsilon$ always occurs in $T$.

\sinote*{modified}{%
Let $\#_Tw$ denote the number of occurrences of a string $w$ in a string $T$.
We will abbreviate it to $\# w$ when no confusion occurs.
}%

\subsection{Run length encoding (RLE) and bridges}

The {\em run-length encoding} $\rle(T)$ of string $T$ is
a compact representation of $T$ such that
each maximal run of the same characters in $T$ is represented
by a pair of the character and the length of the maximal run.
More formally, $\rle(T)= a_1^{p_1} \cdots a_m^{p_m}$ encodes each substring $T[i.. i+p-1]$ by $a^p$ if
$T[j] = a \in \Sigma$ for every $i \le j \le i+p-1$, $T[i-1] \ne T[i]$, and $T[i+p-1] \ne T[i+p]$.
Each $a^p$ in $\rle(T)$ is called a (character) \emph{run},
and $p$ is called the exponent of this run.
The $j$-th maximal run in $\rle(T)$ is denoted by $r_j$,
namely $\rle(T) = r_1 \cdots r_m$.
The \emph{size} of $\rle(T)$, denoted $R(T)$, is the number of maximal character runs in $\rle(T)$.
E.g., for a string $T = \mathtt{aacccccccbbabbbb}$ of length 18,
$\rle(T) = \mathtt{a}^2 \mathtt{c}^7 \mathtt{b}^2 \mathtt{a}^1 \mathtt{b}^4$ and $R(T) = 5$.

  Our model of computation is a standard word RAM with machine word size $\Omega(\log |T|)$,
  and the space requirements of our data structures will be measured by the number of words (not bits).
  Thus, $\rle(T)$ of size $m$ can be stored in $O(m)$ space.

\subsection{Bridges}

A string $w \in \Sigma^*$ of length $|w| \geq 2$ is said to be a \emph{\bridge}\ if
$w[1] \neq w[2]$ and $w[|w|-1] \neq w[|w|]$.
In other words, both of the first run and the last run in $\rle(w)$ are of length $1$.
A substring of $T$ that is a \bridge\ is called a \bridge\ substring of $T$.
Let $B_\ell$ denote the set of bridge substrings $w$ of $T$ with $R(w) = \ell$.
Further let $\mathcal{B} = \bigcup_{\ell}B_\ell$ be the set of all bridge substrings of $T$.
For example, for the same string $T = \mathtt{aacccccccbbabbbb}$ as the above one,
the substring $\mathtt{ac}^7\mathtt{b}^2\mathtt{a}$ of $T$ is a bridge, and
$B_4 = \{
\mathtt{ac}^7\mathtt{b}^2\mathtt{a},
\mathtt{cb}^2\mathtt{a}^1\mathtt{b}
\}$.
For a string $w$ with $R(w) \ge 3$,
we can obtain a bridge substring of $w$
by removing the first and the last runs of $w$
and then \emph{shrinking} the runs at both ends so that their exponents are $1$.
We denote by $\shrink(w)$ such shrunk bridge.
For convenience, let $\shrink(w) = \varepsilon$ if $R(w) \le 2$.
Also, for every $k \ge 2$, we denote $\shrink^k(w) = \shrink(\shrink^{k-1}(w))$.
For example, consider the same $T$ as the above again,
$\shrink(T) = \mathtt{acccccccbbab}$,
$\shrink^2(w) = \mathtt{cbba}$,
$\shrink^3(w) = \mathtt{b}$, and
$\shrink^k(w) = \varepsilon$ for any $k \ge 4$.

\subsection{Minimal absent words (MAWs)}

A string $w \in \Sigma^*$ is called an \emph{absent word} for a string $T$
if $w$ does not occur in $T$, namely if $\#w = 0$.
An absent word $w$ for $T$ is called a \emph{minimal absent word}
or \emph{MAW} for $T$ if
all proper substrings of $w$ occur in $T$.
We denote by $\MAW(T)$ the set of all MAWs for $T$.
An alternative definition of MAWs is such that
a string $aub$ of length at least two with $a, b \in \Sigma$ and $u \in \Sigma^*$ is a MAW of $T$ if $\#(aub) = 0$, $\#(au) \geq 1$ and $\#(ub) \geq 1$.
For a MAW of length $1$ (namely a character not occurring in $T$),
we use a convention that $u = \varepsilon$ and $a$ and $b$ are united into a single character.

The MAWs in $\MAW(T)$ are partitioned into the following five disjoint subsets $\mathcal{M}_i$~($1 \leq i \leq 5$) based on their RLE sizes $R(aub)$:
\begin{itemize}
 \item $\mathcal{M}_1 = \{aub \in \MAW(T) \mid R(aub) = 1\}$;
 \item $\mathcal{M}_2 = \{aub \in \MAW(T) \mid R(aub) = 2, u = \varepsilon\}$;
 \item $\mathcal{M}_3 = \{aub \in \MAW(T) \mid R(aub) = 3, a \neq u[1] \mbox{ and } b \neq u[ |u|]\}$;
 \item $\mathcal{M}_4 = \{aub \in \MAW(T) \mid R(aub) \geq 4, a \neq u[1] \mbox{ and } b \neq u[|u|]\}$;
 \item $\mathcal{M}_5 = \{aub \in \MAW(T) \mid R(aub) \geq 2, a = u[1] \mbox{ or } b = u[|u|]\}$.  
\end{itemize}
For $1 \leq i \leq 5$, a MAW $aub$ in $\mathcal{M}_i$ is called of \emph{type $i$}.

\sinote*{modified}{%
In the rest of this paper, we will consider an arbitrarily fixed string $T$ of length $n$.
For convenience, we assume that $n \ge 3$ and
that there are special terminal symbols $T[1] = T[n] = \$ \not\in\Sigma$
not occurring inside $T$.
Since $\$ \notin \Sigma$,
we do not consider any MAW containing $\$$ for $T$ in our arguments to follow
(recall that a MAW must be an element of $\Sigma^*$).
In addition, since $\$$ does not occur elsewhere in $T$,
$\MAW(T) = \MAW(T[2..n-1])$ holds.
}%

\begin{example}
Consider $T = \mathtt{\$}\mathtt{b}^2\mathtt{ac}^3\mathtt{ba}^2\mathtt{\$} = \mathtt{\$bbacccbaa\$}$. All MAWs in $\MAW(T)$ are divided into the following five types:
$\mathcal{M}_1 = \{\mathtt{aaa,bbb,cccc}\}$;
$\mathcal{M}_2 = \{\mathtt{ca,bc}\}$;
$\mathcal{M}_3 = \{\mathtt{acb,accb}\}$;
$\mathcal{M}_4 = \{\mathtt{cbac}\}$;
$\mathcal{M}_5 = \{\mathtt{bbaa}\}$.
\end{example}

Let $\Sigma'$ denote the set of characters occurring in $T$ except for $\$$.
Let $\sigma' = |\Sigma'|$ be the number of distinct characters occurring in $T[2.. n-1]$.

\section{Upper bounds on the number of MAWs for RLE strings}
\label{sec:upper}

In this section, we present upper bounds for the number of MAWs in a string $T$ that is represented by its RLE $\rle(T)$ of size $R(T) = m$.

\subsection{Upper bounds for the number of MAWs of type 1, 2, 3, 5}

We first consider the number of MAWs except for those of type 4.

\begin{lemma}
\label{M1_number}
$| \mathcal{M}_1 | = \sigma$.
\end{lemma}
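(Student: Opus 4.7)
The plan is to observe that type-1 MAWs have a very constrained form, since $R(aub)=1$ forces the string to consist of a single character run: every element of $\mathcal{M}_1$ must be of the shape $c^k$ for some $c \in \Sigma$ and some $k \ge 1$ (the length-1 convention folds $a$ and $b$ into a single character, which is still of this form). So the task reduces to counting, for each character $c \in \Sigma$, how many powers $c^k$ are MAWs of $T$.

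Next I would show that for each fixed $c \in \Sigma$ there is exactly one such MAW. Let $k_c$ be the largest exponent of any $c$-run appearing in $\rle(T)$, with the convention $k_c = 0$ if $c$ does not occur in $T$ at all. Then $c^{k_c}$ occurs in $T$ while $c^{k_c + 1}$ does not, so $c^{k_c + 1}$ is an absent word. Every proper substring of $c^{k_c+1}$ is $\varepsilon$ or $c^j$ with $1 \le j \le k_c$, and all such strings occur in $T$ (the empty string trivially, and $c^j$ as a factor of the longest $c$-run); hence $c^{k_c+1}\in\mathcal{M}_1$. Conversely, if $c^k$ is a MAW then $c^{k-1}$ must occur in $T$, forcing $k-1 \le k_c$, and $c^k$ must not occur, forcing $k \ge k_c + 1$; so $k = k_c + 1$ is the unique choice.

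Summing over $c \in \Sigma$ yields $|\mathcal{M}_1| = \sigma$, completing the proof. There is no real obstacle here; the only subtlety is to be careful with the length-1 corner case (handled by the paper's convention) and with characters $c \in \Sigma$ that do not appear in $T$, for which the single MAW is $c$ itself via $k_c = 0$.
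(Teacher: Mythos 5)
Your proof is correct and follows essentially the same route as the paper's: one MAW $c^{k_c+1}$ per character $c$, where $k_c$ is the longest maximal run of $c$ (the paper just splits the $k_c=0$ case of non-occurring characters into a separate sentence rather than folding it into the same formula). Your version adds the explicit uniqueness argument, which the paper only asserts, but the underlying idea is identical.
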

\begin{proof}
By the definition of $\mathcal{M}_1$, any MAW in $\mathcal{M}_1$ is of the form $a^k$.
For any character $\alpha \in \Sigma'$ that occurs in $T$,
let $aub = \alpha^{p+1}$ such that $\alpha^p$ is the \emph{longest} maximal run of $\alpha$ in $T$.
Clearly $\alpha^{p} = au = ub$ occurs in $T$ and $\alpha^{p+1}$ does not occur in $T$. Since $R(aub) =R(\alpha^{p+1}) = 1$, $\alpha^{p+1} \in \mathcal{M}_1$ and it is the unique MAW of type 1 consisting of $\alpha$'s. 
For any character $\beta \in \Sigma \setminus \Sigma'$ that does not occur in $T$, clearly $\beta$ is a MAW of $T$ and $\beta \in \mathcal{M}_1$ since $R(\beta) = 1$.
In total, we obtain $|\mathcal{M}_1| = \sigma$.
\end{proof}
Note that this upper bound for $|\mathcal{M}_1|$ is tight for any string $T$ and alphabet $\Sigma$ of size $\sigma$.

\begin{lemma}
\label{M2_number}
$|\mathcal{M}_2| \in O((\sigma')^2)$.
\end{lemma}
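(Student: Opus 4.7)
The plan is to show that each MAW in $\mathcal{M}_2$ is uniquely determined by an ordered pair of distinct characters of $T$, so that $|\mathcal{M}_2|$ is at most the number of such pairs.

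First, I would unpack the defining conditions. If $aub \in \mathcal{M}_2$, then $u = \varepsilon$ and $R(aub) = 2$, so $aub = ab$ is a string of length exactly $2$ with $a \neq b$. Next, since $ab$ is a MAW of $T$, each of its proper substrings must occur in $T$; in particular the length-$1$ substrings $a$ and $b$ occur in $T$. Because MAWs are strings over $\Sigma$ (recall $\$ \notin \Sigma$), both $a$ and $b$ lie in $\Sigma$, and since they occur in $T$ they actually lie in $\Sigma'$.

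Therefore every element of $\mathcal{M}_2$ is of the form $ab$ for some ordered pair $(a,b) \in \Sigma' \times \Sigma'$ with $a \neq b$. The number of such pairs is $\sigma'(\sigma'-1) \in O((\sigma')^2)$, which yields the claimed bound.

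There is no real obstacle here: the bound falls out of the structural constraints baked into the definition of $\mathcal{M}_2$, and no properties of $\rle(T)$ beyond the identification of $\Sigma'$ are needed. The substantive work in this line of argument is deferred to Section~\ref{sec:lower}, where a matching lower bound must be realized by exhibiting a family of strings whose length-$2$ MAWs actually achieve $\Omega((\sigma')^2)$.
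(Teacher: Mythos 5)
Your proof is correct and follows the same route as the paper: every type-2 MAW is a bigram $ab$ with $a \neq b$ whose length-1 proper substrings must occur in $T$, forcing $a, b \in \Sigma'$, so $|\mathcal{M}_2| \leq \sigma'(\sigma'-1)$. Nothing to add.
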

\begin{proof}
Any MAW in $\mathcal{M}_2$ is of the form $ab$ with $a, b \in \Sigma$ and $a \ne b$.
By the definition of MAWs, $ab$ can be a MAW for $T$ only if both $a$ and $b$ occur in $T$, which implies that $a, b \in \Sigma'$.
The number of such combinations of $a$ and $b$ is $\sigma'(\sigma'-1)$.
\end{proof}
Since $\sigma' \leq m$ always holds, we have that $| \mathcal{M}_2 | \in O(m^2)$.
Later we will show that this upper bound for $|\mathcal{M}_2|$ is asymptotically tight.

\begin{lemma}
\label{M3_number}
$| \mathcal{M}_3 |$ is unbounded by $m$.
\end{lemma}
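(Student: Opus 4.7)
The plan is to prove this unboundedness claim by exhibiting an explicit infinite family of strings whose RLE-size is constant but whose number of type-3 MAWs grows without bound. Concretely, I would take the family
\[
T_n \;=\; \mathtt{\$}\,\mathtt{a}^n\,\mathtt{b}\,\mathtt{a}^n\,\mathtt{c}\,\mathtt{a}^n\,\mathtt{\$}
\qquad (n\ge 2),
\]
over the alphabet $\{\mathtt{a},\mathtt{b},\mathtt{c},\mathtt{\$}\}$. Clearly $\rle(T_n) = \mathtt{\$}\,\mathtt{a}^n\,\mathtt{b}^1\,\mathtt{a}^n\,\mathtt{c}^1\,\mathtt{a}^n\,\mathtt{\$}$, so $m = R(T_n) = 5$ is independent of~$n$, while $|T_n| = 3n+4$ grows linearly in $n$. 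This isolates a situation where the RLE compresses the string drastically.

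Next I would show that for every $k$ with $1 \le k \le n-1$, the string $w_k := \mathtt{b}\,\mathtt{a}^k\,\mathtt{c}$ lies in $\mathcal{M}_3(T_n)$. Membership in the right type class is immediate: $R(w_k) = 3$ with a first run $\mathtt{b}$ and a last run $\mathtt{c}$ each of length exactly one, matching the defining condition $a \neq u[1]$ and $b \neq u[|u|]$ of~$\mathcal{M}_3$. For absence from $T_n$, observe that the only occurrence of $\mathtt{b}$ in $T_n$ is followed by the full run $\mathtt{a}^n$ and then $\mathtt{c}$, so any substring of $T_n$ containing both $\mathtt{b}$ and $\mathtt{c}$ must contain exactly $n$ $\mathtt{a}$'s between them; since $k < n$, the string $w_k$ does not occur in $T_n$. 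For minimality, any proper substring of $w_k$ has one of the forms $\mathtt{b}\,\mathtt{a}^{j}$, $\mathtt{a}^{j}\,\mathtt{c}$, or $\mathtt{a}^{j}$ with $j \le k \le n$, together with the single characters $\mathtt{b}$ and $\mathtt{c}$; each of these occurs inside $\mathtt{b}\,\mathtt{a}^n$, $\mathtt{a}^n\,\mathtt{c}$, or $\mathtt{a}^n$ in~$T_n$.

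Since the strings $w_1, w_2, \ldots, w_{n-1}$ are pairwise distinct, we conclude $|\mathcal{M}_3(T_n)| \ge n - 1$ while $m = 5$, which forces $|\mathcal{M}_3|$ to be unbounded by any function of $m$ alone.

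I do not anticipate any serious obstacle; the argument is purely constructive. The only points requiring a small amount of care are (i) making sure the middle character of $w_k$ differs from both endpoint characters so that the RLE-run condition for type~3 is met (satisfied here because $\mathtt{b} \ne \mathtt{a} \ne \mathtt{c}$), and (ii) verifying that $w_k$ really is absent, which relies on the fact that the unique $\mathtt{b}$-to-$\mathtt{c}$ path in $T_n$ has a fixed length. If a stronger statement were desired (for instance, a lower bound linear in $n$ for fixed alphabet size or a more intricate growth rate), one could iterate the construction, but the family above already suffices to establish the stated impossibility.
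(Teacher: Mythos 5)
Your proof is correct and takes essentially the same approach as the paper: both exhibit an explicit family of strings with constant RLE-size in which a single long run $\mathtt{a}^n$ flanked by two distinct characters produces the $n-1$ type-3 MAWs $\mathtt{b}\mathtt{a}^k\mathtt{c}$ ($k<n$). The paper's witness $T=\mathtt{a}\mathtt{c}^{n-2}\mathtt{b}$ with $m=3$ is just a more economical instance of the identical idea.
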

\begin{proof}
Consider a string $T = ac^{n-2}b$, where $a \neq c$ and $c \neq b$. 
Then $ac^kb$ for each $1 \leq k \leq n-3$ is a MAW of $T$ and $R(ac^kb) = 3$.
Since they are the only type 3 MAWs of $T$, we have that $|\mathcal{M}_3| = n-3$.
Clearly, the original length $n$ of $T$ cannot be bounded by $m = R(T) = 3$.   
\end{proof}
Although the number of MAWs of type 3 is unbounded by $m$, later we will present an $O(m)$-space data structure that can enumerate all elements in $\mathcal{M}_3$ in output-sensitive time.

\begin{lemma}
\label{M5_number}
$| \mathcal{M}_5 | \in O(m)$.
\end{lemma}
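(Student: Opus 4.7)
The plan is to split $\mathcal{M}_5 = \mathcal{M}_5^L \cup \mathcal{M}_5^R$, where $\mathcal{M}_5^L = \{aub \in \mathcal{M}_5 \mid a = u[1]\}$ and $\mathcal{M}_5^R = \{aub \in \mathcal{M}_5 \mid b = u[|u|]\}$, and bound $|\mathcal{M}_5^L| \in O(m)$; the bound on $|\mathcal{M}_5^R|$ will then follow symmetrically by applying the same argument to the reverse of $T$. A MAW $w \in \mathcal{M}_5^L$ can uniquely be written as $w = c^{k+1} x$ with $c = a$, $k \ge 1$, and $x$ nonempty starting with a character different from $c$. The MAW conditions force two witnesses in $\rle(T)$: a \emph{short witness}, i.e., a maximal $c$-run of length exactly $k$ followed by $x$ (because $ub$ occurs while $aub$ does not, so no occurrence of $ub$ can be preceded by another $c$); and a \emph{long witness}, i.e., a maximal $c$-run of length at least $k+1$ followed by $x[1..|x|-1]$ but not followed by $x$ itself (because $au$ occurs while $aub$ does not).

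For each character $c \in \Sigma'$, I would consider the compacted trie $\tau_c$ of the strings $f_1, \ldots, f_{m_c}$ that follow the $m_c$ maximal $c$-runs in $T$, with the leaf of the $i$-th run tagged by its run length $k_i$. Writing $x_{-} = x[1..|x|-1]$ and $b = x[|x|]$, the $f_i$ and $f_j$ corresponding to the short and long witnesses share exactly $x_{-}$ as a common prefix and then diverge, $f_i$ continuing with $b$ and $f_j$ with a character different from $b$. Crucially, since $\tau_c$ is compacted, two root-to-leaf paths cannot diverge in the middle of an edge, so their split must occur at an internal node $v$ of string-depth precisely $|x_{-}|$. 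Hence $w$ is encoded by the pair $(v, b)$, where $v$ is a branching node of $\tau_c$ and $b$ is the first character of one of its outgoing edges; the exponent $k$ is then forced to equal the maximum tag among the leaves in the $b$-subtree of $v$.

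To finish, I would argue that a pair $(v, b)$ yields a MAW only when this $b$-branch maximum is strictly smaller than the maximum tag over the entire subtree rooted at $v$ (otherwise the long-witness run would lie on the $b$-branch, contradicting that $w$ does not occur in $T$). Since at least one outgoing branch of $v$ attains the overall maximum, each branching node contributes at most $\deg(v) - 1$ MAWs; by the standard tree identity $\sum_v(\deg(v) - 1) \le m_c - 1$ for a compacted trie with $m_c$ leaves, summing over $c \in \Sigma'$ yields $|\mathcal{M}_5^L| \le \sum_c (m_c - 1) \le m$, and the symmetric bound then gives $|\mathcal{M}_5| \le 2m$. The main subtlety lies in establishing this encoding carefully: one must check both directions, that every MAW really determines such a $(v, b)$ (this is where compactness is used), and conversely that each qualifying pair $(v, b)$ yields a bona fide MAW rather than merely a candidate absent word.
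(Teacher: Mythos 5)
Your proof is correct. It rests on the same combinatorial anchor as the paper's proof: writing the MAW as $c^{k+1}x$, the fact that $ub=c^kx$ occurs while $c^{k+1}x$ does not forces every occurrence of $c^kx$ to sit at a maximal $c$-run of length exactly $k$ (your short witness), so each type-5 MAW is pinned to the runs of $\rle(T)$. The two arguments diverge in how this anchor is turned into a count. The paper fixes one occurrence of $c^kx$ preceded by a character other than $c$, maps the MAW to the run-boundary position where that occurrence begins, and proves injectivity into the at most $m$ such positions by a short prefix-containment argument (two distinct MAWs anchored at the same position would make $c^kx$ and $c^kx'$ prefixes of the same suffix of $T$, so one contains the other, contradicting absence or minimality). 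You instead inject each MAW into a pair consisting of a branching node and an outgoing branch of the compacted trie of run contexts, and invoke $\sum_v(\deg(v)-1)=m_c-1$. Your route needs more apparatus --- the tries, the long-witness analysis, and the prefix-freeness of the $f_i$ (which you should note follows from the terminal $\$$'s, so that the trie really has $m_c$ leaves) --- but the structure you build would also support output-sensitive enumeration of $\mathcal{M}_5$, which the paper handles separately and more cheaply by simply storing all $O(m)$ elements as quadruples (Lemma~\ref{lem:maw_in_constant_space}). One remark: for the upper bound only the forward direction of your encoding is needed (every MAW determines a qualifying pair $(v,b)$, and that pair determines the MAW); the converse you flag at the end, that every qualifying pair is realized by a genuine MAW, is not needed for the bound and would only matter for exact counting or enumeration.
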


\begin{proof}
\sinote*{modified}{%
  Any MAW $aub \in \mathcal{M}_5$ can be represented by $a^{i+1}vb$ or $avb^{i+1}$ with maximal integer $i \geq 1$, where $a^{i}v = u$ in the former and $vb^{i} = u$ in the latter.
Let us consider the case of $a^{i+1}vb$ as the case of $avb^{i+1}$ is symmetric.
Then $ca^{i}vb$ with some character $c \neq a$ must occur in $T$.
Let $k$ be the beginning position of an occurrence of $ca^{i}vb$ in $T$. Then, $T[k+1..k+i] = a^{i}$ is a maximal run of $a$.

Now consider any distinct MAW $a^{i+1} v'b' \in \mathcal{M}_5 \setminus \{a^{i+1} vb\}$ with $v'b' \neq vb$.
Again, $c'a^{i}v'b'$ with some character $c' \neq a$ must occur in $T$.
Suppose on the contrary that $c'a^{i}v'b'$ has an occurrence beginning at the same position $k$ as $ca^{i} vb$.
This implies that $c' = c$, and both $a^{i}vb$ and $a^{i}v'b'$ are prefixes of $T[k+1..|T|]$.
\begin{itemize}
\item If $|a^ivb| < |a^iv'b'|$, then $a^iv'$ contains $a^ivb$ as a substring. Since $a^{i+1}v'$ occurs in $T$, $a^{i+1}vb$ must also occur in $T$. Hence $a^{i+1}vb$ is not a MAW for $T$, a contradiction.

\item If $|a^ivb| > |a^iv'b'|$, then $a^iv$ contains $a^iv'b'$ as a substring. Thus $a^{i+1}vb$ is an absent word for $T$ but it is not minimal. Hence $a^{i+1}vb$ is not a MAW for $T$, a contradiction.

\item If $|a^ivb| = |a^iv'b'|$, then this contradicts that $a^iub \neq a^iu'b'$.
\end{itemize}
Hence, at most two element of $\mathcal{M}_5$ can be associated
with a position $k$ in $T$ such that $T[k] \neq T[k+1]$.
The number of such positions does not exceed $2m$. 
}%
\end{proof}

\subsection{Upper bound for the number of MAWs of type 4} \label{subsec:upperbound_M4}
In the rest of this section, we show an upper bound of the number of MAWs of type 4.
Namely, we prove the following lemma.

\begin{lemma}
\label{M4_number}
$| \mathcal{M}_4 | \in O(m^2)$.
\end{lemma}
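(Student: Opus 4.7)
The plan is to construct an injection $\phi : \mathcal{M}_4 \to \{1,\ldots,m\}^2$, which immediately yields $|\mathcal{M}_4| \le m^2$. For a MAW $aub \in \mathcal{M}_4$ with $u = c_1^{e_1} c_2^{e_2} \cdots c_k^{e_k}$ and $k = R(u) \ge 2$, I set $\phi(aub) = (I, J)$, where $r_I$ is the run of $\rle(T)$ containing the last character of the leftmost occurrence of $au$ in $T$, and $r_J$ is the run containing the last character of the leftmost occurrence of $ub$ in $T$. Since $c_{k-1} \ne c_k$ appears immediately before the $c_k^{e_k}$ block inside $u$, the run $r_I$ (whose character is $c_k$) has length at least $e_k$, while $r_J$, being followed in $T$ by $b \ne c_k$, has length exactly $e_k$. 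In particular, the pair $(I, J)$ already determines the exponent $e_k = |r_J|$ and the character $b = $ the character of $r_{J+1}$.

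It then suffices to show that $\phi$ is injective. Suppose, for contradiction, $\phi(aub) = \phi(a'u'b') = (I, J)$ for two distinct MAWs in $\mathcal{M}_4$. Since $e_k$ and $b$ are determined by $(I, J)$, we have $e_k = e_k'$ and $b = b'$; hence the leftmost occurrences of $au$ and $a'u'$ end at the same position of $T$ (the $e_k$-th character of $r_I$). If $|u| = |u'|$ the two occurrences would coincide, forcing $aub = a'u'b'$; so without loss of generality $|u| < |u'|$, and $au$ is a proper suffix of $a'u'$ as a string.

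Writing $au = a \cdot u$ and $a'u' = a' \cdot u'$, the suffix relation forces $u'[|u'| - |u|] = a$ together with the fact that $u$ is a suffix of $u'$. Concatenating these identifications, the last $|u| + 2$ characters of $a'u'b'$ are exactly $u'[|u'|-|u|] \cdot u \cdot b = a \cdot u \cdot b = aub$, so $aub$ occurs as a proper substring of $a'u'b'$. Because $a'u'b'$ is a MAW, every proper substring of it occurs in $T$; hence $aub$ occurs in $T$. This contradicts $aub$ being a MAW, so $\phi$ is injective and $|\mathcal{M}_4| \le m^2 \in O(m^2)$.

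The main obstacle I anticipate is choosing the right anchor. A naive encoding of $aub$ by its outer two runs on each side, i.e.\ by the pair $(a c_1^{e_1}, c_k^{e_k} b)$, does not yield an injection, because two distinct MAWs with identical outer runs but different middles $u$ can collide on such a pair. Anchoring instead at the $c_k$-runs of the leftmost $au$- and $ub$-occurrences is the crucial step: it synchronizes the right endpoints of the two au-occurrences in $T$, which is exactly what powers the suffix-containment argument that closes the proof.
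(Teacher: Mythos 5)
Your proof is correct (modulo one wording slip) and takes a genuinely different, more elementary route than the paper. The slip: you define $r_J$ as the run containing the last character of the leftmost occurrence of $ub$ --- read literally, that is a run of $b$'s; from your subsequent claims ($|r_J| = e_k$ and ``$b$ is the character of $r_{J+1}$'') you clearly mean the run containing the last character of $u$ within that occurrence of $ub$. With that reading the argument is sound: since $R(u) = R(aub) - 2 \ge 2$, the block $c_k^{e_k}$ is flanked inside the occurrence of $ub$ by $c_{k-1} \ne c_k$ on the left and $b \ne c_k$ on the right, so it coincides exactly with a maximal run of $T$; this makes $(I,J)$ determine $e_k$ and $b$, forces the two $au$-occurrences to end at the same position, and the suffix-containment step correctly exhibits the shorter MAW as a proper substring of the longer one, which would then have to occur in $T$ --- a contradiction. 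The paper proceeds quite differently: it maps each $z \in \mathcal{M}_4$ to a pair of distinct bridges in $K(z^{(1)}) \times K(z^{(1)})$ (Observation~\ref{obs:M4_count}, Lemma~\ref{lem:M4toBridges}) and then proves $\mathcal{X} = \sum_{w \in \mathcal{W}} |K(w)| \in O(m)$ by an inductive charging argument over the tree of inverse images $K^{(i)}_{+}(w)$ (Lemmas~\ref{lem:boundByX}--\ref{M4_upper}), concluding $|\mathcal{M}_4| \le \mathcal{X}^2$. Your direct injection into $\{1,\ldots,m\}^2$ is shorter, self-contained, and even yields the cleaner explicit bound $|\mathcal{M}_4| \le m^2$; what it does not provide is the $K(w)$ and bipartite-graph machinery, which the paper reuses in Section~\ref{sec:representation} to build the $O(m)$-space enumeration structure $\mathsf{D}_4$. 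As a proof of Lemma~\ref{M4_number} in isolation, yours is arguably preferable.
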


Firstly, we explain a way to characterize MAWs of type 4.
For any string $w \in \Sigma^*$ and integer $t > 0$, let $\Exp^t(w)$ be the set of bridges such that
$\Exp^t(w) = \{ w' \in \mathcal{B} \mid \shrink^{t}(w') = w \}$.
Namely, $\Exp^t(w)$ is the \emph{inverse image} of $\shrink^t(w') = w$ for bridge substrings $w'$ of $T$.
We use $\Exp(w)$ to denote $\Exp^1(w)$. 
Figure~\ref{fig:setK} gives an example for $\Exp^t(w)$ ($\Expplus^{t}(w)$ in the figure will be defined later).
\begin{figure}[tbh]
\begin{center}
\includegraphics[scale=0.5]{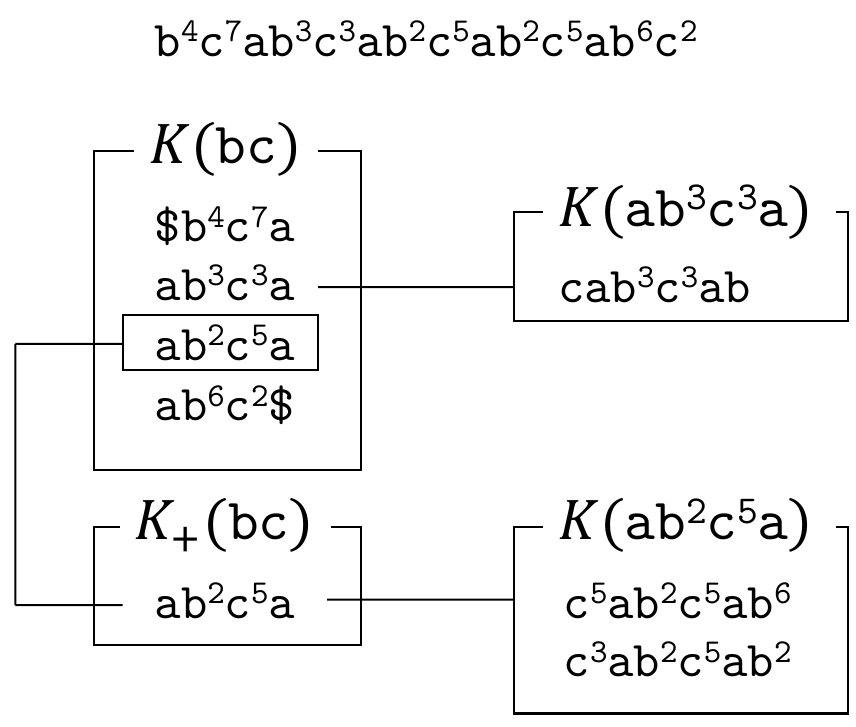}
\caption{
    The bridge $w_1 = \mathtt{ab^2c^5a} \in \Exp(\mathtt{bc})$ is an element of $\Expplus(\mathtt{bc})$ since $|\Exp(w_1)| \geq 2$.
    On the other hand, the bridge $w_2 = \mathtt{ab^3c^3a} \in \Exp(\mathtt{bc})$ is not an element of $\Expplus(\mathtt{bc})$ since $|\Exp(w_2)| < 2$.
}
\label{fig:setK}
\end{center}
\end{figure}
Any MAW $z$ in $\mathcal{M}_4$ is of the form $a\alpha^iu\beta^jb$ with $a, b, \alpha, \beta \in \Sigma, u \in \Sigma^*$, and positive integers $i, j$
where $a, \alpha^i, \beta^j, b$ are the first, the second, the second last, and the last run of $z$, respectively.
By the definition of MAWs, both the suffix $\alpha^iu\beta^jb$ and the prefix $a\alpha^iu\beta^j$ of $z$ occur in $T$.
From this fact, we can obtain the following observations.

\begin{observation}
\label{obs:M4_count}
Each MAW $z \in \mathcal{M}_4$ corresponds to a pair of distinct bridges $(w_1, w_2) \in \Exp(\shrink(z))\times \Exp(\shrink(z))$.
Formally, for each MAW $z = a\alpha^iu\beta^jb \in \mathcal{M}_4$,
there exist characters $a_1, b_1 \in \Sigma \cup \{\$\}$ and integers $i_1 \geq i, j_1 \geq j$
such that $w_1 = a_1 \alpha^{i_1}u\beta^jb, w_2 = a\alpha^iu\beta^{j_1}b_1 \in \Exp(\shrink(z))$
and $w_1 \neq w_2$
(since these two occur in $T$ but $z$ does not occur in $T$).
\end{observation}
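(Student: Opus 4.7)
The plan is constructive: I build the two bridges from occurrences in $T$ of the proper prefix and proper suffix of $z$, extending them outward along the maximal $\alpha$- and $\beta$-runs of $T$, and then prove the two resulting bridges are distinct using the absence of $z$ from $T$.

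Since $z = a\alpha^i u \beta^j b \in \mathcal{M}_4$ is a MAW, both its proper suffix $\alpha^i u \beta^j b$ and its proper prefix $a\alpha^i u \beta^j$ occur in $T$. Fix an occurrence of the suffix at position $k$; because the suffix begins with $\alpha \in \Sigma$ and $T[1] = \$$, we have $k \ge 2$. Let $i_1 \ge i$ be the length of the maximal $\alpha$-run of $T$ ending at position $k + i - 1$, and let $a_1 \in \Sigma \cup \{\$\}$ be the character immediately preceding this run; by maximality $a_1 \ne \alpha$, and $a_1$ exists thanks to the $\$$ sentinel at $T[1]$. Symmetrically, from an occurrence of the prefix in $T$, extend the trailing $\beta^j$ to its maximal $\beta$-run of length $j_1 \ge j$, and let $b_1 \in \Sigma \cup \{\$\}$ be the character just after it, so $b_1 \ne \beta$. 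Set $w_1 := a_1 \alpha^{i_1} u \beta^j b$ and $w_2 := a \alpha^i u \beta^{j_1} b_1$; both occur in $T$ by construction.

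To see that $w_1, w_2 \in K(z^{(1)})$, observe that each is a bridge: $w_1$ begins with distinct characters $a_1, \alpha$ (since $a_1 \ne \alpha$) and ends with distinct characters $\beta, b$ (from the run structure of $z$), and the symmetric check applies to $w_2$ using $a \ne \alpha$ and $b_1 \ne \beta$. Applying $(\cdot)^{(1)}$ — strip the first and last runs, then truncate the remaining end-runs to exponent $1$ — yields $\alpha u \beta$ in both cases (the sub-case $u = \varepsilon$ is handled identically since $\alpha \ne \beta$ follows from the runs of $z$), which equals $z^{(1)}$.

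The main step, and the only one that invokes the absence of $z$ from $T$, is to show $w_1 \ne w_2$. Suppose for contradiction $w_1 = w_2$; matching lengths gives $d := i_1 - i = j_1 - j \ge 0$. If $d = 0$, then $w_1 = w_2 = z$, but $w_1$ is a substring of $T$, contradicting that $z$ does not occur in $T$. If $d \ge 1$, compare position $i + 2$: in $w_1$ this position still lies inside the $\alpha^{i_1}$ block (of length $i + d \ge i + 1$) and equals $\alpha$, whereas in $w_2$ it equals $u[1]$ when $u \ne \varepsilon$, or the first $\beta$ of $\beta^{j_1}$ when $u = \varepsilon$. In either sub-case the maximality of the runs $\alpha^i$ and $\beta^j$ inside $z$ forces that character to differ from $\alpha$, a contradiction.
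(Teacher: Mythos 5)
Your proof is correct and follows the same route the paper intends: the paper states this as an Observation justified only by the parenthetical remark that the proper prefix and suffix of $z$ occur in $T$ while $z$ does not, and your construction (extending those occurrences outward along maximal runs, using the $\$$ sentinels for existence of $a_1,b_1$, and deriving distinctness from the absence of $z$ via the length/position comparison) is exactly the careful formalization of that argument.
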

This observation gives a main idea of our characterization which is stated in the following lemma.
\begin{lemma} \label{lem:M4toBridges}
For any bridge $w$, $|\{ z \mid \shrink(z) = w, z \in \mathcal{M}_4 \}| \leq |\Exp(w)|(|\Exp(w)|-1)$.
\end{lemma}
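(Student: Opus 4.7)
The plan is to exhibit an injection $\phi$ from $\{z \in \mathcal{M}_4 : z^{(1)} = w\}$ into the set of ordered pairs of distinct elements of $K(w)$; since this set has cardinality $|K(w)|(|K(w)|-1)$, the claimed bound follows at once.

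For the setup, write the bridge $w$ as $w = \alpha u' \beta$ with $\alpha = w[1]$, $\beta = w[|w|]$, and $u' = w[2..|w|-1]$. Every MAW $z \in \mathcal{M}_4$ with $z^{(1)} = w$ is then uniquely of the form $z = a \alpha^i u' \beta^j b$ with $a \neq \alpha$, $b \neq \beta$, $i, j \geq 1$, and every element of $K(w)$ has the analogous shape $c \alpha^p u' \beta^q d$ with $c \neq \alpha$, $d \neq \beta$, $p, q \geq 1$, additionally occurring as a substring of $T$. To define $\phi(z)$, I will use the MAW property: the proper suffix $\alpha^i u' \beta^j b$ of $z$ occurs in $T$, so I take its leftmost occurrence and extend it leftward through the maximal run of $\alpha$'s to the first differing character $c$, obtaining a bridge $w_1 = c \alpha^{i'} u' \beta^j b \in K(w)$ with $i' \geq i$. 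Symmetrically, the leftmost occurrence of the proper prefix $a \alpha^i u' \beta^j$ extended rightward through the maximal run of $\beta$'s produces a bridge $w_2 = a \alpha^i u' \beta^{j'} d \in K(w)$. I set $\phi(z) := (w_1, w_2)$.

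The key distinctness step is to show $w_1 \neq w_2$. Comparing the 4-tuples that parametrize elements of $K(w)$, the equation $w_1 = w_2$ would force $c = a$, $i' = i$, $j' = j$, and $d = b$, so the common bridge would be exactly $z$; since this bridge is a substring of $T$, we would obtain $\#z \geq 1$, contradicting that $z$ is absent. Hence $\phi$ indeed lands in ordered pairs of distinct bridges of $K(w)$. Injectivity is then immediate: from $w_1$ we read off $(j, b)$ as its last two runs, from $w_2$ we read off $(a, i)$ as its first two runs, and together with $u'$, which is determined by $w$ alone, this reconstructs $z = a \alpha^i u' \beta^j b$. The main obstacle I anticipate is precisely the distinctness step, as it is the only place where minimality of the absent word $z$ is invoked and must be handled with care.
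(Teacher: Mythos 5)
Your proof is correct and follows essentially the same route as the paper: the paper's Observation~\ref{obs:M4_count} supplies exactly your pair $(w_1,w_2)$ of distinct bridges in $K(w)$ (with distinctness justified, as you do, by the fact that $w_1=w_2$ would force the bridge to equal the absent word $z$), and the paper's proof of the lemma is the same injectivity argument of reading off $(a,i)$ and $(j,b)$ from the two components. Your version is slightly more explicit in fixing a canonical (leftmost-occurrence) choice to make the map well-defined, but the content is identical.
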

\begin{proof}
Let $\mathcal{M}_4(w) = \{ z \mid \shrink(z) = w, z \in \mathcal{M}_4 \}$.
By Observation~\ref{obs:M4_count}, 
each $z \in \mathcal{M}_4(w)$ corresponds to a pair 
$(w_1, w_2) \in \Exp(\shrink(z)) \times \Exp(\shrink(z))$ where $w_1 \neq w_2$.
Let $z_1 = a_1\alpha^{i_1}u\beta^{j_1}b_1, z_2 = a_2\alpha^{i_2}u\beta^{j_2}b_2$ be distinct MAWs in $\mathcal{M}_4(w)$ where $\shrink(z_1) = \shrink(z_2) = w$.
Assume towards a contradiction that $z_1$ and $z_2$ correspond to $(a' \alpha^{i'}u\beta^jb, a\alpha^iu\beta^{j'}b') \in \Exp(w) \times \Exp(w)$.
This implies that, by Observation~\ref{obs:M4_count}, $i = i_1 = i_2, j = j_1 = j_2, a = a_1 = a_2, b = b_1 = b_2$.
Thus $z_1 = z_2$ holds, a contradiction.
Hence, for any distinct MAWs $z_1, z_2 \in \mathcal{M}_4(w)$, $z_1$ and $z_2$ correspond to distinct elements of $\Exp(\shrink(z)) \times \Exp(\shrink(z))$.
Since the number of elements $(w_1, w_2)$ in $\Exp(\shrink(z)) \times \Exp(\shrink(z))$ 
such that $w_1 \neq w_2$ is ${|\Exp(w)|}(|\Exp(w)|-1)$,
this lemma holds.
\end{proof}
Since each MAW $z$ corresponds to an element $(w_1, w_2) \in \Exp(\shrink(z)) \times \Exp(\shrink(z))$ such that $w_1 \neq w_2$,
it is enough for the bound to sum up all $|\Exp(w)|^2$ such that $|\Exp(w)| \geq 2$ holds.
Let $\mathcal{W}$ be the set of bridges $w$ such that $|\Exp(w)| \geq 2$ or $w \in B_2 \cup B_3$.
Let $\mathcal{X} = \sum_{w \in \mathcal{W}}|\Exp(w)|$.
For considering such $\Exp(w)$, we also define a subset $\Expplus^t(w)$ of $\Exp^t(w)$ as follows:
For any string (bridge) $w$ and integer $t > 0$,
\tmnote*{modified}{%
\[
    \Expplus^{t}(w) = \{ w' \mid w' \in \Exp^t(w), |\Exp({w'})| \geq 2 \}.
\]
}%
We also use $\Expplus(w)$ to denote $\Expplus^1(w)$.
Figure~\ref{fig:M4-tree} shows an illustration for $\Exp^i(w), \Expplus^{i}(w), \mathcal{W}$, and $\mathcal{X}$.
\begin{figure}[tb]
    \centering
    \includegraphics[scale=0.4]{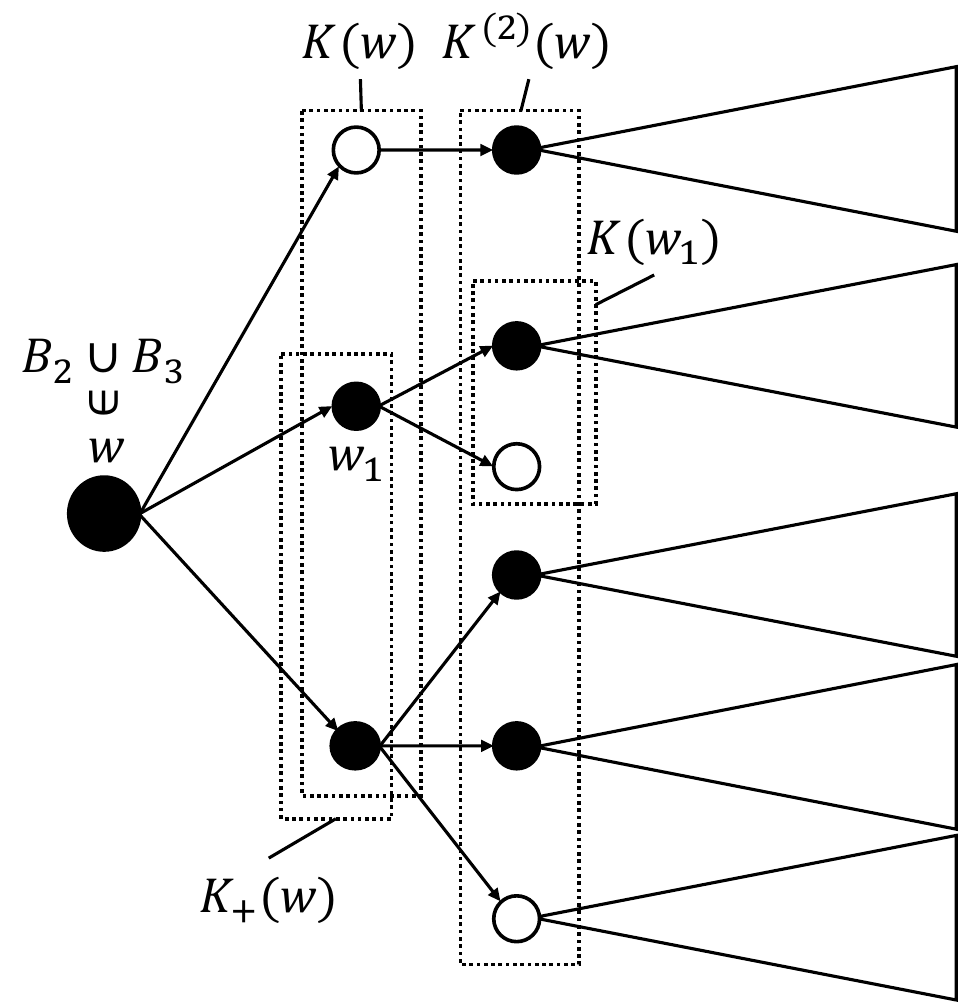}
    \caption{
        This tree shows an illustration for $\Exp^i(w), \Expplus^{i}(w), \mathcal{W}$, and $\mathcal{X}$.
        The root node represents a bridge $w \in B_2 \cup B_3$.
        The set of children of the root corresponds to $\Exp(w)$, 
        namely, each child $x$ represents a bridge such that $\shrink(x) = w$.
        Each black node represents a bridge $x$ such that $|\Exp(x)| \geq 2$ 
        (i.e., each black node has at least two children) or the root.
        Let $W(w)$ be the set of nodes consisting of all the black nodes
        in the tree rooted at a bridge $w \in B_2 \cup B_3$. 
        Then $\mathcal{W}$ is the union of $W(w)$ for all $w \in B_2 \cup B_3$, 
        and $\mathcal{X}$ is the total number of children of black nodes in $\mathcal{W}$.
    }
    \label{fig:M4-tree}
\end{figure}
We give the following lemma that explains relations between $\Exp^i(w), \Expplus^{i}(w)$, and $\mathcal{X}$.
\begin{lemma} \label{lem:boundByX}
\[
    \mathcal{X} = \sum_{w \in B_2 \cup B_3} \left( |\Exp(w)| + \sum_{i=1}^{\lfloor m/2 \rfloor - 1} \sum_{z \in \Expplus^{i}(w)}|\Exp(z)| \right).
\]
\end{lemma}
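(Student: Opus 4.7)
The plan is a double-counting argument: the map $w \mapsto w^{(1)}$ induces a forest on $\mathcal{B}$ whose roots are the bridges in $B_2 \cup B_3$, and the right-hand side is precisely the sum $\sum_{w \in \mathcal{W}} |K(w)|$ re-indexed by (root, depth) inside this forest. I would prove the identity in three stages: (a) set up the forest, (b) split $\mathcal{W}$ into roots and deeper nodes, and (c) justify the outer index bound $i \le \lfloor m/2 \rfloor - 1$.

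For (a), a direct check from the definition of $(\cdot)^{(1)}$ shows that whenever $R(w) \ge 4$, the string $w^{(1)}$ is again a bridge with $R(w^{(1)}) = R(w) - 2$: removing the first and last runs shrinks $R$ by $2$, truncating the new boundary exponents to $1$ keeps $w^{(1)}$ a bridge, and the resulting length is at least $2$. Iterating then terminates in $B_2 \cup B_3$ after exactly $\lfloor R(w)/2 \rfloor - 1$ steps, so every bridge receives a unique root $w_0 \in B_2 \cup B_3$ and a unique depth $i \ge 0$.

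For (b), I would split $\mathcal{W}$ using its defining disjunction. Bridges $w \in B_2 \cup B_3$ are exactly the roots and contribute the summand $|K(w)|$ sitting outside the inner sum on the right-hand side. For $w \in \mathcal{W} \setminus (B_2 \cup B_3)$ we have both $|K(w)| \ge 2$ and $R(w) \ge 4$; setting $i = \lfloor R(w)/2 \rfloor - 1 \ge 1$ and $w_0 = w^{(i)} \in B_2 \cup B_3$, the two conditions $w^{(i)} = w_0$ and $|K(w)| \ge 2$ are exactly the statement $w \in K^{(i)}_{+}(w_0)$. Conversely, every $z \in K^{(i)}_{+}(w_0)$ is a bridge with $|K(z)| \ge 2$ at depth $i \ge 1$, so lies in $\mathcal{W} \setminus (B_2 \cup B_3)$. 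Because each bridge has a unique (root, depth) pair, the correspondence is a bijection, and summing $|K(w)|$ over the second part of $\mathcal{W}$ produces exactly the inner double sum on the right-hand side.

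The step I expect to be the main obstacle is (c), showing that the outer index can indeed be capped at $\lfloor m/2 \rfloor - 1$. The argument is that any $w \in \mathcal{W} \setminus (B_2 \cup B_3)$ satisfies $|K(w)| \ge 2 > 0$, so there exists a bridge substring $w'$ of $T$ with $w'^{(1)} = w$ and $R(w') = R(w) + 2$; since every substring of $T$ has RLE-size at most $m + 2$ (the two terminal $\$$-runs contribute at most $2$), we conclude $R(w) \le m$, and hence $i = \lfloor R(w)/2 \rfloor - 1 \le \lfloor m/2 \rfloor - 1$. Combining the contributions from (b) under this bound then yields the claimed identity.
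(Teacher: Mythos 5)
Your proof is correct and takes essentially the same route as the paper's: re-indexing $\mathcal{X} = \sum_{w\in\mathcal{W}}|K(w)|$ by the unique root in $B_2 \cup B_3$ and the depth $i = \lfloor R(w)/2\rfloor - 1$ under the map $w \mapsto w^{(1)}$, so that each $w \in \mathcal{W}$ with $R(w)\ge 4$ and $|K(w)|\ge 2$ appears exactly once as an element of some $K^{(i)}_{+}(w_0)$. The paper's version is terser (it merely checks that the even and odd $R(w)$ cases land in $K^{(i)}_{+}$ of their respective roots), whereas you additionally make explicit the verification that the index cap $i \le \lfloor m/2\rfloor - 1$ loses nothing, which the paper leaves implicit.
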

\begin{proof}
Let $z_{\mathsf{even}}$ be a bridge where $R(z_{\mathsf{even}}) = 2i+2$ for some $i \geq 1$.
Notice that $\shrink(z_{\mathsf{even}}) = c_1c_2 \in B_2$ for some distinct characters $c_1, c_2$.
By the definition of $\Expplus^{i}(\cdot)$,  
if $|\Exp(z_{\mathsf{even}})| \geq 2$, then $z_{\mathsf{even}} \in \Expplus^{i}(c_1c_2)$.
Let $z_{\mathsf{odd}}$ be a bridge where $R(z_{\mathsf{odd}}) = 2i+3$ for some $i \geq 1$.
Notice that $\shrink(z_{\mathsf{odd}}) = c_1c_2^kc_3 \in B_3$ for some characters $c_1, c_2, c_3$ and an integer $k \geq 1$.
By the definition of $\Expplus^{i}(\cdot)$,  
if $|\Exp(z_{\mathsf{odd}})| \geq 2$, then $z_{\mathsf{odd}} \in \Expplus^{i}(c_1c_2^kc_3)$.
Therefore the statement holds.
\end{proof}
This implies that $|\mathcal{M}_4| \leq \sum_{w \in \mathcal{W}} |\Exp(w)|^2 \leq \mathcal{X}^2$.
Thus, if $\mathcal{X} \in O(m)$, $|\mathcal{M}_4| \in O(m^2)$.

\ynnote*{suggestion from reviewer, moved to appendix}{%
We can also observe that $\sum_{i=1}^{\lfloor m/2 \rfloor - 1} \sum_{z \in \Expplus^{i}(w)}|\Exp(z)|$ is the sum of the number of children of black nodes (which have more than a single child) in the tree for $w$.
The number of leaves of the tree is an upper bound for the sum.
It is also clear that $|\Exp(w)|$ can be bounded by the number of leaves of the tree (In Appendix we give a more mathematical description for the above discussion as Observation~\ref{occur_kwsize} and Proposition~\ref{m4_savespace2}).
Consequently, we obtain $|\mathcal{X}| \in O(m)$ as in Lemma~\ref{M4_upper}.
}%

\begin{lemma}
\label{M4_upper}
$|\mathcal{X}| \in O(m)$.
\end{lemma}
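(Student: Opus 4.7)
The plan is to use Lemma~\ref{lem:boundByX} to decompose $\mathcal{X}$ as a sum indexed by $w \in B_2 \cup B_3$, to bound each inner summand by $O(\#w)$ by combining Lemma~\ref{m4_savespace2} with Observation~\ref{m4_block}, and finally to bound $\sum_{w \in B_2 \cup B_3}\#w$ by $O(m)$ using the RLE structure of $T$.

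For the first step, consider any $w \in B_2 \cup B_3$ with $|K(w)| \geq 2$. Applying Lemma~\ref{m4_savespace2} with $t = \lfloor m/2 \rfloor - 1$ gives
\[
|K(w)| + \sum_{i=1}^{\lfloor m/2 \rfloor - 1} \sum_{z \in K^{(i)}_+(w)} |K(z)| \;\leq\; \#w + \sum_{i=1}^{\lfloor m/2 \rfloor - 1} |K^{(i)}_+(w)|,
\]
and Observation~\ref{m4_block} bounds the right-hand sum by $\#w$, yielding the target bound $2\#w$. The residual case $|K(w)| \leq 1$ is handled similarly: either the inner sum vanishes (when $|K(w)|=0$), or it reduces to an application of Lemma~\ref{m4_savespace2} on the unique child of $w$ after a one-step index shift (when $|K(w)|=1$), giving an $O(\#w)$ bound in either subcase since $\#w$ dominates the count for any descendant bridge.

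For the second step, I count bridge occurrences position by position in $T$. Each occurrence of a $B_2$-bridge $c_1 c_2$ at position $p$ requires $T[p]\neq T[p+1]$, so $\sum_{w \in B_2}\#w$ equals the number of run boundaries in $\rle(T)$, which is $O(m)$. For a $B_3$-bridge $c_1 c_2^k c_3$ occurring at position $p$, the bridge conditions $T[p] \neq T[p+1]$ and $T[p+k] \neq T[p+k+1]$, combined with $T[p+1\mathinner{..}p+k] = c_2^k$, force $T[p+1\mathinner{..}p+k]$ to be an entire maximal run of $c_2$ in $T$. Hence each $B_3$-bridge occurrence corresponds bijectively to one maximal run of $T$, giving $\sum_{w \in B_3}\#w \leq m$. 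Combining the two steps yields $\mathcal{X} \leq \sum_{w \in B_2 \cup B_3} O(\#w) \in O(m)$.

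The main conceptual obstacle is the structural observation in the second step: the bridge condition at both endpoints pins the middle run $c_2^k$ down to a complete maximal run of $T$, which is precisely what buys an $O(m)$ bound instead of a naive $O(n)$ one. The case analysis for $|K(w)| \leq 1$ in the first step is minor technical bookkeeping that does not change the order of the bound.
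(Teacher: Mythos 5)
Your proposal is correct and follows essentially the same route as the paper's proof: decompose $\mathcal{X}$ via Lemma~\ref{lem:boundByX}, bound each summand by $2\#w$ using Lemma~\ref{m4_savespace2} together with Observation~\ref{m4_block}, and then bound $\sum_{w \in B_2 \cup B_3}\#w$ by $O(m)$ by charging $B_2$-occurrences to run boundaries and $B_3$-occurrences to maximal runs. Your explicit handling of the $|K(w)| \leq 1$ case and the injective charging of $B_3$-bridge occurrences to maximal runs are details the paper leaves implicit, but they do not change the argument.
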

\begin{proof}
By Lemma~\ref{lem:boundByX} and the above discussion, we have
\begin{eqnarray*}
\mathcal{X} &=& \sum_{w \in B_2 \cup B_3} \left( |\Exp(w)| + \sum_{i=1}^{\lfloor m/2 \rfloor - 1} \sum_{z \in \Expplus^{i}(w)}|\Exp(z)| \right) \nonumber \\
&\leq& \sum_{w \in B_2 \cup B_3} 2\#w \nonumber \\
&\leq& 2 \left( (m-1) + (m-2) \right) \in O(m).
\end{eqnarray*}
\end{proof}

We are ready to prove Lemma~\ref{M4_number}:
\begin{proof}[Proof of Lemma~\ref{M4_number}]
    $|\mathcal{M}_4| \leq \sum_{w \in \mathcal{W}} |\Exp(w)|^2 \leq |\mathcal{X}|^2
    \leq \left( 2(2m-3) \right)^2 \in O(m^2)$.
\end{proof}

\section{Lower bounds on the number of MAWs for RLE strings}
\label{sec:lower}

In the previous section, we showed a tight bound $|\mathcal{M}_1| = \sigma$, and showed that $|\mathcal{M}_3|$ is unbounded by the RLE size $m$. In this section, we give tight lower bounds for the sizes of $\mathcal{M}_2$, $\mathcal{M}_3$, and $\mathcal{M}_5$ which asymptotically match the upper bounds given in the previous section.
\sinote*{added}{%
Throughout this section, we omit the terminal $\$$ at either end of $T$,
since our lower bound instances do not need them.
}%

\begin{lemma} \label{M2_lowerbound}
There exists a string $T$ such that $| \mathcal{M}_2 | = \sigma'(\sigma'-2)+1$.
\end{lemma}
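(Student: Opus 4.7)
The plan is to construct a witness string consisting of a single run of each character of $\Sigma'$ and then count type-$2$ MAWs by complementation. Specifically, I would let $\Sigma' = \{a_1, \ldots, a_{\sigma'}\}$ and take $T = a_1 a_2 \cdots a_{\sigma'}$.

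First I would observe that every character of $\Sigma'$ occurs in $T$, so by the argument used in Lemma~\ref{M2_number} the candidates for elements of $\mathcal{M}_2$ are exactly the $\sigma'(\sigma'-1)$ ordered pairs $(a,b)$ with $a, b \in \Sigma'$ and $a \ne b$. Next I would identify the candidates that actually occur in $T$: the length-$2$ substrings of $T$ are the pairs $a_i a_{i+1}$ for $i = 1, \ldots, \sigma'-1$, and these are $\sigma' - 1$ pairwise distinct pairs of distinct characters (using that the $a_i$ themselves are distinct). Subtracting, the number of $ab \in \mathcal{M}_2$ equals
\[
    \sigma'(\sigma'-1) - (\sigma' - 1) \;=\; (\sigma'-1)^2 \;=\; \sigma'(\sigma'-2) + 1,
\]
as required.

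I do not foresee a real obstacle here: the construction and the counting are essentially routine once the right string is chosen, and the bookkeeping simply relies on the fact that consecutive characters of $T$ are distinct so that every length-$2$ substring is a valid forbidden candidate. The only convention to keep in mind is the one stated just above the lemma, that the terminal $\$$'s are omitted in this section; they therefore neither contribute new length-$2$ substrings nor inflate the alphabet $\Sigma'$ used in the bound.
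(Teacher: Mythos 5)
Your proposal is correct and uses essentially the same witness string as the paper ($T = a_1 a_2 \cdots a_{\sigma'}$ with all characters distinct); the only difference is that you count the type-2 MAWs by subtracting the $\sigma'-1$ occurring bigrams from the $\sigma'(\sigma'-1)$ candidates, whereas the paper enumerates the absent bigrams directly, and both yield $(\sigma'-1)^2 = \sigma'(\sigma'-2)+1$.
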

\begin{proof}
  Let $T = \mathtt{123} \cdots \sigma'$, where all characters in $T$ are mutually distinct. Any bigram occurring in $T$ is of the form $i(i+1)$ with $1 \leq i < \sigma'$.
  Thus, for each $1 \leq i < \sigma'$, bigram $i \cdot j$ with any $j \in \{1, \ldots, i-1, i+2, \ldots, \sigma'\}$ is a type-2 MAW for $T$, and bigram $\sigma' \cdot j$ is a type-2 MAW for $T$. Namely, the set $\mathcal{M}_2$ of type-2 MAWs for $T$ is:
\[
\mathcal{M}_2 = \left\{
\begin{array}{l}
\mathtt{13}, \ldots , \mathtt{1}\sigma', \\
\mathtt{21} , \mathtt{24}, \ldots , \mathtt{2}\sigma', \\
\mathtt{31}, \mathtt{32}, \mathtt{35}, \ldots , \mathtt{3}\sigma', \\
\ldots, \\
(\sigma'-1)\mathtt{1} , \ldots , (\sigma'-1)(\sigma'-2), \\
\sigma' \mathtt{1} , \ldots , \sigma'(\sigma'-1)
\end{array}
\right\}.
\]
Thus we have $| \mathcal{M}_2 | = \sigma'(\sigma'-2)+1$ for this string $T$.
\end{proof}
Since $\sigma' = m$ for the string $T$ of Lemma~\ref{M2_lowerbound}, we obtain a tight lower bound $|\mathcal{M}_2| \in \Omega(m^2)$ in terms of $m$.
The string $T = \mathtt{123} \cdots \sigma'$ can easily be generalized so that $m < n$, where $n = |T|$. For instance, consider $T' = \mathtt{1}^{p_1}{2}^{p_2}{3}^{p_3} \cdots \sigma'^{p_{\sigma'}}$ with $p_i > 1$ for each $i$. The set of type-2 MAWs for $T'$ is equal to that for $T$.

\begin{lemma}
\label{M4_lowerbound}
There exists a string $T$ with $R(T) = m$
such that $| \mathcal{M}_4 | \in  \Omega(m^2)$.
\end{lemma}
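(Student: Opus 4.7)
The plan is to construct an explicit family of strings $T_k$ witnessing $|\mathcal{M}_4| \in \Omega(m^2)$. I would use four distinct characters $a, b, c, d$ and set
\[
T_k \;=\; \prod_{i=1}^k a\, b^i\, c^{k-i+1}\, d,
\]
namely the concatenation of $k$ blocks, the $i$-th being $a\, b^i\, c^{k-i+1}\, d$. Each block contributes four maximal runs, and consecutive blocks meet as $d\cdot a$, so $R(T_k) = 4k$. I therefore set $m = 4k$.

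The core claim I would prove is that $a b^i c^j d$ is a MAW of type 4 for every pair $(i, j)$ with $i, j \geq 1$ and $i + j \leq k$. The argument rests on a careful analysis of the occurrences of three related strings in $T_k$:
\begin{itemize}
\item An occurrence of $a b^i c^j$ (with $i, j \geq 1$) must have its leading $a$ at some block start (since $a$ only appears there); the $b^i$ must consume the entire $b$-run of that block, forcing the block index to be $i$, and $c^j$ must fit inside the following $c$-run, requiring $j \leq k - i + 1$. Hence $a b^i c^j$ occurs in $T_k$ iff $i + j \leq k + 1$ and $i \leq k$.
\item Symmetrically, $b^i c^j d$ (with $i, j \geq 1$) occurs iff its trailing $d$ is the $d$ of some block; the $c^j$ must then be the full $c$-run of that block (forcing the block index to be $k - j + 1$), and $b^i$ must fit inside that block's $b$-run of length $k - j + 1$. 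This gives $i + j \leq k + 1$ and $j \leq k$.
\item The full string $a b^i c^j d$ occurs in $T_k$ iff it coincides with an entire block, i.e.\ iff $j = k - i + 1$.
\end{itemize}

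Combining these, for every $(i, j)$ with $i, j \geq 1$ and $i + j \leq k$ the full string $a b^i c^j d$ is absent from $T_k$, while both its length-$(i+j+1)$ prefix $a b^i c^j$ and suffix $b^i c^j d$ are substrings of $T_k$. Every proper substring of $a b^i c^j d$ either omits the final $d$ (and is thus a substring of the prefix) or omits the initial $a$ (and is thus a substring of the suffix), so every proper substring occurs in $T_k$. Therefore $a b^i c^j d \in \MAW(T_k)$. It is of type 4 because its four runs consist of the pairwise distinct characters $a, b, c, d$, and the middle part $u = b^i c^j$ satisfies $u[1] = b \neq a$ and $u[|u|] = c \neq d$. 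The number of admissible pairs is $|\{(i,j):\, i,j \geq 1,\ i+j \leq k\}| = \binom{k}{2} = k(k-1)/2$, so $|\mathcal{M}_4| \geq \binom{k}{2} \in \Omega(k^2) = \Omega(m^2)$.

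The one place where I expect to need care is the occurrence analysis: I must explicitly rule out occurrences starting in the interior of a block and make the exact-match constraints that pin down the block index fully rigorous. Once that is in place, the minimality argument, the type-4 classification, and the counting are routine.
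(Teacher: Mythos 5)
Your proposal is correct and takes essentially the same approach as the paper: the paper uses the string $\mathtt{abc}^{p} \cdot \mathtt{ab}^2\mathtt{c}^{p-1} \cdots \mathtt{ab}^{p}\mathtt{c}\cdot \mathtt{a}$, whose blocks $\mathtt{ab}^i\mathtt{c}^{p-i+1}$ mirror yours, with the trailing $\mathtt{a}$ of the next block playing the role of your separator $\mathtt{d}$, and the resulting MAWs $\mathtt{ab}^i\mathtt{c}^j\mathtt{a}$ for $i+j \le p$ are counted exactly as in your triangular region. Your occurrence analysis is sound and in fact more detailed than the paper's, which simply exhibits the set $\mathcal{M}_4$ without justification.
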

\begin{proof}
Consider string
$T = \mathtt{abc}^{p} \cdot \mathtt{ab}^2\mathtt{c}^{p-1} \cdot \mathtt{ab}^3\mathtt{c}^{p-2} \cdot \mathtt{ab}^4\mathtt{c}^{p-3} \cdots \mathtt{ab}^{p-1}\mathtt{c}^2 \cdot \mathtt{ab}^{p}\mathtt{c}\cdot \mathtt{a}$,
where $\mathtt{a}$, $\mathtt{b}$, and $\mathtt{c}$ are mutually distinct characters.
%
\ynnote*{modified based on comments}{%
Then the set of type-4 MAWs for $T$ is a superset of the following set: 
\[
\left\{
\begin{array}{l}
  \mathtt{abca}, \mathtt{abc}^2\mathtt{a}, \ldots, \mathtt{abc}^{p-1}\mathtt{a}, \\
  \mathtt{ab}^{2}\mathtt{ca}, \mathtt{ab}^{2}\mathtt{c}^2\mathtt{a}, \ldots , \mathtt{ab}^2\mathtt{c}^{p-2}\mathtt{a}, \\
  \mathtt{ab}^{3}\mathtt{ca}, \mathtt{ab}^{3}\mathtt{c}^2\mathtt{a}, \ldots , \mathtt{ab}^3\mathtt{c}^{p-3}\mathtt{a}, \\
  \ldots, \\
  \mathtt{ab}^{p-2}\mathtt{ca}, \mathtt{ab}^{p-2}\mathtt{c}^2\mathtt{a}, \\
  \mathtt{ab}^{p-1}\mathtt{c} \mathtt{a} 
\end{array}
\right\}.
\]
Since $m = 3p+1$, we have $|\mathcal{M}_4| > p(p-1)/2 \in \Omega(p^2) = \Omega(m^2)$.
}%
\end{proof}

\begin{lemma} \label{M5_lowerbound}
There exists a string $T$ with $R(T) = m$ such that $| \mathcal{M}_5 | \in  \Omega(m)$.
\end{lemma}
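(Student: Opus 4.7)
The plan is to exhibit an explicit family of strings $T_k$ indexed by $k$ with $R(T_k) = 4k$ and $|\mathcal{M}_5| \ge k$, matching the upper bound of Lemma~\ref{M5_number} up to a constant. A convenient choice is
\[
T_k \;=\; \mathtt{b}\mathtt{a}_1^2\mathtt{b}^2\mathtt{a}_1 \cdot \mathtt{b}\mathtt{a}_2^2\mathtt{b}^2\mathtt{a}_2 \cdots \mathtt{b}\mathtt{a}_k^2\mathtt{b}^2\mathtt{a}_k
\]
over the alphabet $\{\mathtt{b}, \mathtt{a}_1, \ldots, \mathtt{a}_k\}$ of pairwise distinct characters.

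First I will verify the run-length structure. Each block $\mathtt{b}\mathtt{a}_i^2\mathtt{b}^2\mathtt{a}_i$ contributes exactly four maximal runs, and the junction $\cdots\mathtt{a}_i\,|\,\mathtt{b}\cdots$ between consecutive blocks involves a change of character, so no runs merge across blocks. Hence $\rle(T_k)$ has $m := R(T_k) = 4k$ runs.

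Next, for each $1 \le i \le k$, I will argue that the length-$4$ string $\mathtt{b}^2\mathtt{a}_i^2$ is a MAW of type~$5$. Writing it as $aub$ with $a=\mathtt{b}$, $u=\mathtt{b}\mathtt{a}_i$, $b=\mathtt{a}_i$, we have $R(aub)=2$ and $u[1]=\mathtt{b}=a$, so the structural condition for $\mathcal{M}_5$ holds. For minimality, the two length-$3$ proper factors $\mathtt{b}^2\mathtt{a}_i$ and $\mathtt{b}\mathtt{a}_i^2$ both appear inside the $i$-th block, and all shorter factors occur trivially. For absence, every occurrence of $\mathtt{b}^2$ in $T_k$ lies inside some block $\mathtt{b}\mathtt{a}_j^2\mathtt{b}^2\mathtt{a}_j$ and is followed by a single $\mathtt{a}_j$ and then by $\mathtt{b}$ (or the end of the string); in particular, $\mathtt{b}^2$ is never followed by two consecutive copies of any character, so $\mathtt{b}^2\mathtt{a}_i^2$ does not occur in $T_k$.

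Since $\mathtt{a}_1, \ldots, \mathtt{a}_k$ are pairwise distinct, the $k$ strings $\mathtt{b}^2\mathtt{a}_i^2$ are pairwise distinct MAWs in $\mathcal{M}_5$, giving $|\mathcal{M}_5| \ge k = m/4 \in \Omega(m)$. I do not expect a substantial obstacle in this proof; the only step that deserves care is confirming that no occurrence of $\mathtt{b}^2$ in $T_k$ is immediately followed by two consecutive copies of the \emph{same} $\mathtt{a}_j$, which is immediate from the block structure.
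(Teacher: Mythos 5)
Your proof is correct. The run count $R(T_k)=4k$ is right (runs never merge across block boundaries), each $\mathtt{b}^2\mathtt{a}_i^2$ satisfies the type-5 condition (writing it as $aub$ with $a=\mathtt{b}$, $u=\mathtt{b}\mathtt{a}_i$, $b=\mathtt{a}_i$ gives $R(aub)=2$ and $a=u[1]$), both maximal proper factors $\mathtt{b}^2\mathtt{a}_i$ and $\mathtt{b}\mathtt{a}_i^2$ occur inside the $i$-th block, and your absence argument --- every occurrence of $\mathtt{b}^2$ is followed by a single $\mathtt{a}_j$ and then $\mathtt{b}$ or the end of the string --- is sound. However, your construction is genuinely different from the paper's. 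The paper's witness is $T = \mathtt{abc}\cdot\mathtt{ab}^2\mathtt{c}^2\cdot\mathtt{ab}^3\mathtt{c}^3\cdots\mathtt{ab}^p\mathtt{c}^p\cdot\mathtt{a}$ over a \emph{three-letter} alphabet, where the $2p \in \Omega(m)$ type-5 MAWs are produced by varying the run exponents while keeping the characters fixed; you instead keep all exponents at most $2$ and pay with an alphabet of size $k+1=\Theta(m)$. Both constructions prove the lemma as stated, since it places no restriction on the alphabet, but the paper's version establishes the strictly stronger fact that $|\mathcal{M}_5| \in \Omega(m)$ already over a constant-size alphabet. That distinction matters in the context of this paper: several of the other bounds (e.g.\ for $\mathcal{M}_2$) are naturally expressed in terms of $\sigma'$, and your instance, having $\sigma'=\Theta(m)$, leaves open the possibility that $|\mathcal{M}_5|$ might be bounded by a function of $\sigma'$ alone, whereas the paper's instance rules that out.
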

\begin{proof}
Consider string
$T = \mathtt{abc} \cdot \mathtt{ab}^2\mathtt{c}^2 \cdot \mathtt{ab}^3\mathtt{c}^3 \cdots \mathtt{ab}^p\mathtt{c}^p \cdot \mathtt{a}$,
where $\mathtt{a}$, $\mathtt{b}$, and $\mathtt{c}$ are mutually distinct characters.
\ynnote*{modified based on comments}{%
Then the set of type-5 MAWs for $T$ is a superset of the set
\[
  \{\mathtt{b}^{i+1}\mathtt{c}^{i}\mathtt{a} \mid 1 \leq i \leq p-1\}.
\]
Since $m = 3p+1$, $| \mathcal{M}_5 | > p-1 \in \Omega(p) = \Omega(m)$.
}%
\end{proof}

\section{Efficient representations of MAWs for RLE strings}
\label{sec:representation}

Consider a string $T$ that contains $\sigma'$ distinct characters.
In this section, we present compact data structures that can output every MAW for $T$ upon query, using a total of $O(m)$ space, where $m = R(T)$ is the size of $\rle(T)$. We will prove the following theorem:
\begin{theorem} \label{theo:representation}
  There exists a data structure $\mathsf{D}$ of size $O(m)$ which can output all MAWs for string $T$ in $O(|\MAW(T)|)$ time, where $m$ is the RLE-size of $T$.
\end{theorem}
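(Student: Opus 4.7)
The plan is to build, for each of the five subsets $\mathcal{M}_1,\ldots,\mathcal{M}_5$ of $\MAW(T)$, a dedicated $O(m)$-space component supporting output-sensitive enumeration, and combine them into $\mathsf{D}$. Every MAW is emitted as a constant-size record: either an explicit triple of characters and an exponent, or a pair of pointers into $\rle(T)$.

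For $\mathcal{M}_1,\mathcal{M}_2,\mathcal{M}_5$ the representation is direct. For $\mathcal{M}_1$ I store the longest run-length $p_\alpha$ of each $\alpha\in\Sigma'$ together with a list of the characters in $\Sigma\setminus\Sigma'$, and emit $\alpha^{p_\alpha+1}$ and the single-character MAWs; this uses $O(\sigma)$ time and space. For $\mathcal{M}_2$ I store, for each $a\in\Sigma'$, the sorted list $B_a=\{b\in\Sigma':ab\text{ occurs in }T\}$ whose total size is $O(m)$ since each entry corresponds to a distinct RLE-adjacent pair, and emit $\Sigma'\setminus(B_a\cup\{a\})$ by a merge with sorted $\Sigma'$. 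For $\mathcal{M}_5$, the proof of Lemma~\ref{M5_number} injects $\mathcal{M}_5$ into the $O(m)$ RLE boundaries of $T$, so I precompute and store the (at most one) MAW associated with each boundary as a tuple of positions in $\rle(T)$. For $\mathcal{M}_4$ I store the bridge tree sketched in Figure~\ref{fig:M4-tree}: by Lemma~\ref{M4_upper} it has $O(m)$ nodes in total, and by Lemma~\ref{lem:M4toBridges} every type-4 MAW corresponds to an ordered pair of distinct children of a black node and is emittable in $O(1)$ time.

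For $\mathcal{M}_3$, which is the only type whose size can be unbounded by $m$, I encode each MAW $ac^kb$ as the triple $(a,k,b)$ and store, in $O(m)$ total space, the values $L_{ac}$ and $R_{cb}$ for every RLE-adjacent pair of distinct characters together with the sorted set $S_{acb}\subseteq[1,\min(L_{ac},R_{cb})]$ of flanked $c$-run lengths for every triple $(a,c,b)$ realized by some RLE position. The type-3 MAWs with triple $(a,c,b)$ are exactly $\{ac^kb:k\in[1,\min(L_{ac},R_{cb})]\setminus S_{acb}\}$, so they can be listed by scanning the gaps of sorted $S_{acb}$. To enumerate across all triples I would process (a) the $O(m)$ RLE-realized triples by walking $\rle(T)$, and (b) for each middle character $c$ the pairs $(a,b)\in A_c\times B_c$ with $S_{acb}=\emptyset$, which are \emph{automatically} productive (each contributes $\min(L_{ac},R_{cb})\ge 1$ MAWs) and hence self-amortizing.

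The hard part I anticipate is bounding the total enumeration cost by $O(|\MAW(T)|)$ rather than $O(m+|\mathcal{M}_3|)$: the $O(m)$ work spent iterating RLE-realized but unproductive triples for $\mathcal{M}_3$ is not absorbed by $|\mathcal{M}_3|$ itself and must be charged, via a careful case analysis, against the MAWs of the other four types that $\mathsf{D}$ already emits — in particular the $|\mathcal{M}_1|=\sigma$ type-1 MAWs and the type-4 MAWs produced by the bridge tree. I expect this amortization to be the main technical hurdle; the remaining steps are the routine verifications that each per-type component has size $O(m)$ and that the characterizations used are correct, which follow from the lemmas already established in Section~\ref{sec:upper}.
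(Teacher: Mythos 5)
Your overall architecture matches the paper's: a separate $O(m)$-space component per type, each MAW emitted as an $O(1)$-size record (the paper's Lemma~\ref{lem:maw_in_constant_space}), the same characterizations for types 1, 2, 5, essentially the set $\mathcal{BS}_{acb}(T)$ together with the combined-bridge pairs $\mathcal{L}_c\times\mathcal{R}_c$ for type 3, and the $K(w)$/bridge-tree machinery for type 4. However, the step you yourself flag as the main hurdle is a genuine gap, and the resolution you propose for it does not work. You suggest charging the additive $O(m)$ (or $O(\sigma'^2)$) overhead incurred while scanning unproductive candidates against MAWs of \emph{other} types, ``in particular the $\sigma$ type-1 MAWs and the type-4 MAWs.'' This requires $|\MAW(T)|\in\Omega(m)$, which is false: for $T=(\mathtt{ab})^k$ over the binary alphabet $\{\mathtt{a},\mathtt{b}\}$ one has $m=2k$ but $\MAW(T)=\{\mathtt{aa},\mathtt{bb},(\mathtt{ba})^k\}$, so $|\MAW(T)|=3$. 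Any scheme that spends $\Theta(m)$ time on candidates and pays for it with MAWs of other types fails on such inputs. The paper's fix is much simpler and entirely local to each type: for each $i\in\{2,3,4\}$, if $|\mathcal{M}_i|\in O(m)$ then the whole set $\mathcal{M}_i$ is stored \emph{explicitly} as a list of $O(1)$-size quadruples (still $O(m)$ space), so enumeration is pure output with zero overhead; only when $|\mathcal{M}_i|\in\Omega(m)$ is the implicit structure used, and then the $O(m)$ (resp.\ $O(\sigma'^2)\subseteq O(|\mathcal{M}_i|+m)$) overhead is absorbed by $|\mathcal{M}_i|$ itself. No cross-type accounting is needed, and without this fallback your $\mathcal{M}_2$ enumeration (a $\Theta(\sigma')$-time merge per character) and your $\mathcal{M}_3$ scan suffer from the same problem.

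A second, smaller gap is in your type-4 enumeration. Lemma~\ref{lem:M4toBridges} gives only an \emph{injection} from $\mathcal{M}_4$ into ordered pairs of distinct elements of $K(w)$; not every such pair yields a MAW, so ``emittable in $O(1)$ time'' per pair is not justified. The paper resolves this with the bipartite graph $G_w$ and Lemma~\ref{M4_mawequal}: a pair $(v_L(k),v_R(k'))$ yields a MAW iff it is a non-edge \emph{and} the two exponent-maximality conditions $\mathsf{e}_R(E^{LR}_{max}(k))\geq\mathsf{e}_R(k')$ and $\mathsf{e}_L(E^{RL}_{max}(k'))\geq\mathsf{e}_L(k)$ hold, and Algorithm~\ref{alg_M4_enumerate} prunes the candidate set $C_R$ so that failed tests are paid for only once. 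You would need an analogous test-and-prune argument (plus, again, the explicit-storage fallback when $|\mathcal{M}_4|\in O(m)$) to make the type-4 component output-sensitive.
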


In our representation of MAWs that follows, we store $\rle(T)$ explicitly with $O(m)$ space. 
The following is a general lemma that we can use when we output a MAW from our data structures.

\sinote*{modified}{%
\begin{lemma} \label{lem:maw_in_constant_space}
  For each MAW $w \in \MAW(T)$, $\rle(w)$ of size $R(w)$
  can be retrieved in $O(R(w))$ time from a tuple $(a, i, s, t, b, j)$ and $\rle(T)$,
  where $a,b \in \Sigma$, $0 \leq i,j \leq |T|$, and $0 \leq s,t \leq m$. 
\end{lemma}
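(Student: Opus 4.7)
My plan is to proceed by case analysis on the five MAW types and, for each type, to specify the role of the quadruple $(a, k, p, r)$ together with a run-by-run reconstruction of $\rle(w)$ from the quadruple and $\rle(T)$.

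The key observation is that for any MAW $w = aub \in \MAW(T)$ with $R(w) \ge 3$, the suffix $ub$ is a substring of $T$, and hence its RLE appears as a block of $R(w) - 1$ consecutive runs of $\rle(T)$, with at most the first and the last of these possibly being truncated relative to the corresponding maximal runs of $T$. Fixing such an occurrence of $ub$, I would set $k$ to be its starting position in $T$, $p$ to be the index of the run of $\rle(T)$ that contains position $k$, $r$ to be the index of the run of $\rle(T)$ that contains the final character of the occurrence, and $a$ to be the first character of $w$. Then the interior runs of $\rle(w)$ coincide with the runs $r_{p+1}, r_{p+2}, \ldots, r_{r-1}$ of $\rle(T)$, while the first and the last runs of $\rle(w)$ are recoverable in $O(1)$ time from $(a,k)$ together with the characters and exponents stored at runs $r_p$ and $r_r$ of $\rle(T)$. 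For the degenerate types I would use collapsed encodings: $(\alpha, k, 0, 0)$ for type~1 $w = \alpha^k$, and $(a, 1, p, p)$ for type~2 $w = ab$ with $p$ the index of any run of $\rle(T)$ whose character is $b$ (which exists since $b \in \Sigma'$). The reconstruction algorithm then emits $\rle(w)$ in order, producing each of its $R(w)$ runs in $O(1)$ time thanks to the array storage of $\rle(T)$, for a total of $O(R(w))$ time.

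The main subtlety will be verifying that a single character slot $a$ together with three integer slots $k, p, r$ is enough to encode every MAW across the five types, given that a MAW can involve up to three distinct characters (those of its first, last, and at least one interior run). The resolution is that every character of $w$ other than $a$ lies in $\Sigma'$ and therefore labels some run of $\rle(T)$, so it is implicitly carried by one of the pointers $p$ or $r$; similarly, the only exponent of $w$ not directly readable from $\rle(T)$ is a boundary exponent, which is captured by $k$. A secondary check I would carry out is that for type~5 MAWs, where the RLE of $w$ differs from that of the corresponding $T$-substring by one extra copy of a boundary character, the shifted boundary exponent is correctly recoverable from $k$ together with the exponent at run $r_p$; this follows from the fact, used in the proof of Lemma~\ref{M5_number}, that such a MAW is uniquely pinned to a position $k$ with $T[k] \neq T[k+1]$.
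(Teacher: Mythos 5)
Your overall scheme---a case analysis by MAW type, with the quadruple packing one character, one position/length, and run-pointers into $\rle(T)$, and with the interior runs of the MAW read off as full maximal runs of $T$---is the same as the paper's. Your anchoring differs in a useful way: the paper points at a block of runs $r_p \cdots r_{p+R(ub)-1}$ that is required to spell $ub^j$ exactly, whereas you store the text position $k$ of an occurrence of $ub$ and recover the truncated left-boundary run as a suffix of $r_p$ via $k$; this correctly handles occurrences of $ub$ whose first run is not left-maximal in $T$, which is in fact needed already for types~3 and~4. (Minor bookkeeping: when $a \neq u[1]$ the truncated suffix of $r_p$ is the \emph{second} run of $\rle(w)$, not part of its first run, so your list of interior runs is off by one; the information is still there.)

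The genuine gap is at the right boundary, i.e.\ the type-5 subcase $b = u[|u|]$. There the last run of $w = aub$ is $b^{j+1}$ with $j \geq 1$, and in your chosen occurrence of $ub$ this run is a \emph{prefix} of the maximal run $r_r$ of $T$, in general a proper one, so its exponent is not determined by $(a,k,p,r)$ and $\rle(T)$. Concretely, for $T = \mathtt{\$acb^2dcb^4e\$}$ the string $\mathtt{acb^3}$ is a type-5 MAW, its suffix $\mathtt{cb^3}$ occurs only inside the run block $\mathtt{cb^4}$, and nothing in your quadruple distinguishes exponent $3$ from $4$. Your ``secondary check'' covers only the left boundary (the $a = u[1]$ case), where the suffix-of-$r_p$ exponent really is pinned down by $k$; the right case needs the mirrored convention (anchor at the \emph{end} of an occurrence of $au$), which is what the paper invokes by symmetry. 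Moreover, both conventions fail simultaneously when $a = u[1]$ \emph{and} $b = u[|u|]$---a case that does occur, e.g.\ $\mathtt{bbaa}$ for $T = \mathtt{\$bbacccbaa\$}$ in the paper's first example---because then two boundary exponents, each bounded only by $|T|$, must be recovered from a single slot $k$. To be fair, the paper's own proof wrongly asserts this doubly-extended case cannot arise and is also incomplete there; since $|\mathcal{M}_5| \in O(m)$ one could store both boundary exponents explicitly for these MAWs without harming the space bound, but your proof (like the paper's) does not say this.
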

\begin{proof}
  When $R(w) = 1$~(i.e. $w \in \mathcal{M}_1$), then since $w$ is of the form $a^i$ with $i \geq 1$, we can simply represent it by $(a, i, 0, 0, 0, 0)$.
  
  When $R(w) \geq 2$, then let $w = aub$.
  When $aub \in \mathcal{M}_2$, then $w = ab$ and thus it can be simply represented by $(a, 1, 0, 0, b, 1)$.
  When $aub \in \mathcal{M}_3 \cup \mathcal{M}_4$, then $a \neq u[1]$ and $b \neq u[|u|]$. Hence it can be represented by $(a, 1, s, t, b, 1)$ where $r_s \cdots r_{t} = \rle(u)$.
  When $aub \in \mathcal{M}_5$, then $a = u[1]$ or $u[|u|] =  b$.
  Let $i,j$ be the maximal integers such that $a^i u' b^j = aub$.
  We can represent it by $(a, i, s, t, b, j)$ with $r_s \cdots r_t = \rle(u')$.
\end{proof}
}%

For ease of discussion, in what follows, we will identify
each MAW $w$ with its corresponding tuple $(a, i, s, t, b, j)$ which takes $O(1)$ space.

\subsection{Representation for $\mathcal{M}_1$}

We have shown that $|\mathcal{M}_1| = \sigma$ (Lemma~\ref{M1_number}), however, $\sigma$ can be larger than $\sigma'$ and $m$. However, a simple representation for $\mathcal{M}_1$ exists, as follows:

\begin{lemma} \label{M1_expression}
There exists a data structure $\mathsf{D}_1$ of $O(\sigma') \subseteq O(m)$ space that can output each MAW in $\mathcal{M}_1$ in $O(1)$ time.
\end{lemma}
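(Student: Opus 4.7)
The plan is to handle separately the two subsets of $\mathcal{M}_1$ identified in the proof of Lemma~\ref{M1_number}: the MAWs of the form $\alpha^{p+1}$ with $\alpha \in \Sigma'$ and $\alpha^{p}$ the longest maximal run of $\alpha$ in $T$ (of which there are exactly $\sigma'$), and the single-character MAWs $\beta \in \Sigma \setminus \Sigma'$ (of which there are $\sigma - \sigma'$). The first family fits into $O(\sigma')$ space by design; the second must be emitted on the fly since $\sigma - \sigma'$ can be arbitrarily large compared to $m$.

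First I would sweep $\rle(T) = r_1 \cdots r_m$ once in $O(m)$ time and, for each $\alpha \in \Sigma'$, record the length $p_\alpha$ of its longest maximal run, storing the result as an array of pairs $(\alpha, p_\alpha)$ sorted by $\alpha$. This occupies $O(\sigma') \subseteq O(m)$ space. Enumerating the first family of MAWs is then a single pass over this array: for each entry $(\alpha, p_\alpha)$ we emit the quadruple $(\alpha, p_\alpha + 1, 0, 0)$, which by Lemma~\ref{lem:maw_in_constant_space} is a valid constant-space representation of the MAW $\alpha^{p_\alpha+1}$, and each emission takes $O(1)$ time.

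For the second family I would not store $\Sigma \setminus \Sigma'$ explicitly but rather stream through it using the sorted $\Sigma'$ array. Concretely, between every pair of consecutive entries of $\Sigma'$, as well as below the minimum and above the maximum, I walk through the integers of $\Sigma$ lying in the gap and emit each such $\beta$ as the quadruple $(\beta, 1, 0, 0)$. Under the standard word-RAM convention that $\Sigma = \{1,\ldots,\sigma\}$ is an integer alphabet and that each character fits in one machine word, the successor of any integer in a gap is computed in $O(1)$ arithmetic operations, so the per-MAW output cost is $O(1)$.

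The main thing to be careful about is the $O(1)$ per-MAW bound under an $O(\sigma')$-space constraint: we cannot afford a bitmap or hash table indexed by $\Sigma$ to test membership in $\Sigma'$, so the enumeration of $\Sigma \setminus \Sigma'$ must be driven by the sorted list of $\Sigma'$ together with integer arithmetic on $\Sigma$, as described above. Granting this, the total enumeration time is $O(\sigma') + O(\sigma - \sigma') = O(\sigma) = O(|\mathcal{M}_1|)$, matching the claim of the lemma, and the stored array together with $\rle(T)$ (already kept in $\mathsf{D}$) gives the desired $O(m)$-space data structure $\mathsf{D}_1$.
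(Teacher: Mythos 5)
Your proposal is correct and follows essentially the same approach as the paper: store the $\sigma'$ type-1 MAWs for occurring characters as a sorted list in $O(\sigma')$ space, and enumerate all of $\Sigma=\{1,\ldots,\sigma\}$ in increasing order against that list, emitting $\alpha^{p_\alpha+1}$ for occurring characters and $\beta$ itself for the gaps. Your gap-walking formulation is just a restatement of the paper's ``scan $M$ in parallel'' merge, and the $O(1)$ per-output bound holds for the same reason in both.
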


\begin{proof}
For ease of explanation, assume that the string $T$ is over the integer alphabet $\Sigma = \{1, \ldots, \sigma\}$ and let $\Sigma' = \{c_1, \ldots, c_{\sigma'}\} \subseteq \{1, \ldots, \sigma\}$.
Let $M = \langle c_1^{p_1}, \ldots, c_{\sigma'}^{p_{\sigma'}} \rangle$ be the list of type-1 MAWs in $\mathcal{M}_1$ that are runs of characters in $\Sigma'$, sorted in the lexicographical order of the characters, i.e. $1 \leq c_1 < \cdots < c_{\sigma'} \leq \sigma$.
We store $M$ explicitly in $O(\sigma')$ space.
When we output each MAW in $\mathcal{M}_1$, we test the numbers (i.e. characters) in $\Sigma = \{1, \ldots, \sigma\}$ incrementally, and scan $M$ in parallel:
For each $c = 1, \ldots, \sigma$ in increasing order, if $c^p \in M$ with some $p > 1$ then we output $c^p$, and otherwise we output $c$.
\end{proof}

\subsection{Representation for $\mathcal{M}_2$}

Recall that $|\mathcal{M}_2| \in O(\sigma'^2) \subseteq O(m^2)$ and this bound is tight in the worst case.
Therefore we cannot store all elements of $\mathcal{M}_2$ explicitly,
as our goal is an $O(m)$-space representation of MAWs.
Nevertheless, the following lemma holds:

\begin{lemma}
\label{M2_expression}
There exists a data structure $\mathsf{D}_2$ of $O(m)$ space that can output each MAW in $\mathcal{M}_2$ in $O(1)$ amortized time.
\end{lemma}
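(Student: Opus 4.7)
The plan is to augment $\rle(T)$ with auxiliary structures so that, at query time, we never spend work on pairs that do not yield a MAW. Alongside $\rle(T)$ I would store a sorted array $C=[c_1,\ldots,c_{\sigma'}]$ of the distinct characters of $\Sigma'$, and, for each $b\in\Sigma'$, a pointer $p_b$ to some run of $b$ in $\rle(T)$; each MAW $ab\in\mathcal{M}_2$ will then be emitted as the quadruple $(a,1,p_b,1)$ prescribed by Lemma~\ref{lem:maw_in_constant_space}. From the $m-1$ run boundaries in $\rle(T)$ I would also compute, for every $a\in\Sigma'$, the sorted list $S_a\subseteq C$ of characters $b\neq a$ such that $ab$ occurs in $T$. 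Because $\sum_{a\in\Sigma'}|S_a|\leq m-1$, all of this fits into $O(m)$ space.

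The next step is to precompute, during construction, a compact description of the complement $\mathrm{Miss}(a)=C\setminus(S_a\cup\{a\})$ for each $a$: scan $S_a\cup\{a\}$ in order of index in $C$ and record the maximal index intervals that are \emph{not} hit, so $\mathrm{Miss}(a)$ is stored as a list of intervals over $C$. This takes $O(|S_a|)$ time and space per $a$, and $O(m)$ in total. In the same pass I would mark $a$ as \emph{active} iff $\mathrm{Miss}(a)\neq\emptyset$ and collect all active characters into one list. Since every active $a$ contributes at least one MAW to $\mathcal{M}_2$, the number of active characters is at most $|\mathcal{M}_2|$, and likewise the total number of intervals across all active $a$ is at most $|\mathcal{M}_2|$.

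To report $\mathcal{M}_2$, I would iterate only over the active characters $a$ and, for each one, iterate through its intervals, emitting $(a,1,p_{c_\ell},1)$ for every index $\ell$ in every interval. Each emission costs $O(1)$ time; the bookkeeping outside of emissions walks the active list and the interval lists, both of which are bounded by $|\mathcal{M}_2|$. Hence the total output time is $O(|\mathcal{M}_2|)$, i.e.\ $O(1)$ amortized per reported MAW, and when $\mathcal{M}_2=\emptyset$ the active list is empty so no time is spent.

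The main obstacle is avoiding the naive $O(\sigma'^2)$ cost of examining every pair $(a,b)$, which would spoil the amortized bound whenever $|\mathcal{M}_2|\ll\sigma'^2$. The active-character filter together with the interval representation sidesteps this: we never touch an $a$ unless it produces output, and once we do, the intervals let us enumerate its missing successors in time proportional to the output, so the $O(m)$ overhead is confined to construction rather than to the per-query amortized cost.
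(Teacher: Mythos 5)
Your proof is correct, and it reaches the same $O(m)$-space / $O(1)$-amortized bound as the paper, but the enumeration mechanism is genuinely different. The paper also stores the set of occurring bigrams in $O(m)$ space (as a trie of bigrams rather than your adjacency lists $S_a$), but at query time it simply tests \emph{all} $\sigma'(\sigma'-1)$ ordered pairs against the trie; the amortized bound then rests on the counting identity $\sigma'^2 = \Theta(|\mathsf{D}_2| + |\mathcal{M}_2|)$ together with a case split ($|\mathcal{M}_2| \in O(m)$: store explicitly; $|\mathcal{M}_2| \in \Omega(m)$: the $O(\sigma'^2)$ scan is dominated by the output). You instead precompute, per character $a$, the maximal gap intervals of $C \setminus (S_a \cup \{a\})$ --- at most $|S_a|+2$ of them, so $O(m)$ overall --- and restrict the query to active characters, so every step of the enumeration is charged to an emitted MAW and you get $O(|\mathcal{M}_2|)$ reporting time with no case distinction and no reliance on $|\mathcal{M}_2|$ dominating $m$. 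Your route costs a little more precomputation (building and complementing the lists $S_a$), whereas the paper's trie-walk needs essentially no preprocessing beyond the trie itself; both are valid, and yours is arguably the cleaner argument for the amortized claim since the work is output-sensitive by construction rather than by a charging inequality.
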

\begin{proof}
If $|\mathcal{M}_2| \in O(m)$, then we explicitly store all elements of $\mathcal{M}_2$.

If $|\mathcal{M}_2| \in \Omega(m)$, then let $\mathsf{D}_2$ be the trie that represents all bigrams that occur in $T$.
See Figure~\ref{fig:data_D2} for a concrete example of $\mathsf{D}_2$.
Note that for any pair $a, b \in \Sigma'$ of \emph{distinct} characters both occurring in $T$, $ab$ is either in $\mathsf{D}_2$ or in $\mathcal{M}_2$.
Since the number of such pairs $a, b$ is $\sigma' (\sigma'-1)$,
we have that $\sigma'^2 = \Theta(|\mathsf{D}_2|+|\mathcal{M}_2|)$, where $|\mathsf{D}_2|$ denotes the size of the trie $\mathsf{D}_2$.
Since $|\mathsf{D}_2| < m$, we have $\sigma'^2 = O(|\mathcal{M}_2|+m)$.
Suppose that the character labels of the out-going edges of each node in $\mathsf{D}_2$ are lexicographically sorted.
When we output each element in $\mathcal{M}_2$,
we test every bigram $ab$ such that $a \neq b$ and $a, b \in \Sigma'$ in the lexicographical order,
and traverse $\mathsf{D}_2$ in parallel in a depth-first manner.
We output $ab$ if it is not in the trie $\mathsf{D}_2$.
This takes $O(\sigma'^2 + |\mathsf{D}_2|) \subseteq O(|\mathcal{M}_2|+m) = O(|\mathcal{M}_2|)$ time, since $|\mathcal{M}_2| \in \Omega(m)$.
\end{proof}

\begin{figure}[t!]
\centering
\includegraphics[width=7cm]{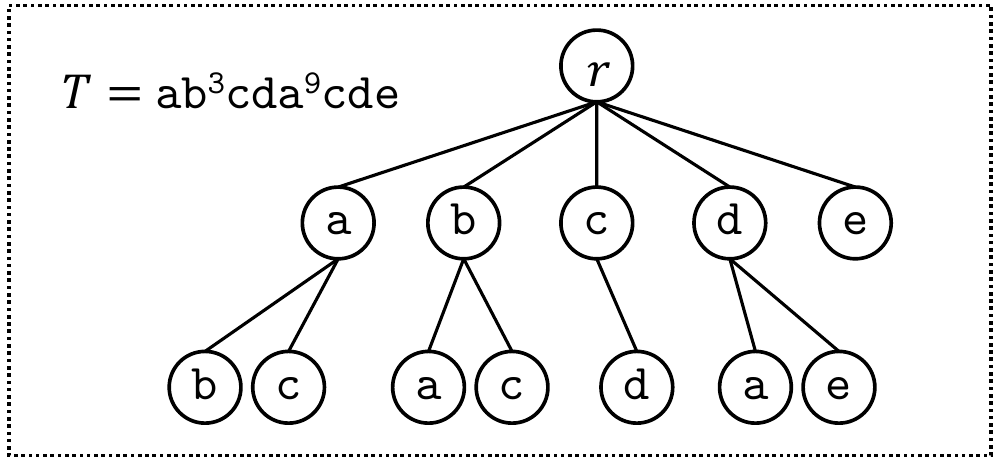}
\caption{
  The trie $\mathsf{D}_2$ for string $T = \mathtt{\$ab}^3\mathtt{cda}^9\mathtt{cde\$}$.
  A bigram $ab$ with $a \neq b$, $a, b \in \Sigma'$ is in $\mathcal{M}_2$ iff $ab$ is not in this trie $\mathsf{D}_2$. For instance, $\mathtt{ae}$ and $\mathtt{db}$ are MAWs of $T$.
}
\label{fig:data_D2}
\end{figure}

\subsection{Representation for $\mathcal{M}_3$}

Recall that the number of MAWs of type 3 in $\mathcal{M}_3$ is unbounded by the RLE size $m$ (Lemma~\ref{M3_number}).
Nevertheless, we show that there exists a compact $O(m)$-space data structure
that can report each MAW in $\mathcal{M}_3$ in $O(1)$ time.

Notice that, by definition, a MAW $aub$ of type 3 is a \bridge\ and therefore, it is of the form $ac^kb$ with $c \in \Sigma'_T \setminus \{a, b\}$ and $k \geq 1$.

We begin with some observations.
For a triple $(a,c,b)$ of characters with $a \neq c$ and $b \neq c$,
let us consider the ordered set $\mathcal{BS}_{acb}(T)$ of \bridge\ substrings of $T$
which are of the form $a c^\ell b$~($\ell \geq 1$),
where the elements in $\mathcal{BS}_{acb}(T)$ are sorted in increasing order of $\ell$.
Let $\ell_{\max} = \max\{\ell \mid a c^\ell b \in \mathcal{BS}_{acb}(T)\}$.
Then, for any $1 \leq k < \ell_{\max}$,
$a c^k b \in \mathcal{M}_3$ iff $a c^k b \notin \mathcal{BS}_{acb}(T)$.
For instance, consider string $T = \mathtt{ac}^3\mathtt{bac}^9\mathtt{b ac}^5 \mathtt{bc}^4\mathtt{e}$ for which $\mathcal{BS}_{\mathtt{acb}}(T) = \{\mathtt{ac}^3\mathtt{b}, \mathtt{ac}^5\mathtt{b}, \mathtt{ac}^9\mathtt{b}\}$.
Then, $\{\mathtt{ac}^1\mathtt{b}, \mathtt{ac}^2\mathtt{b}, \mathtt{ac}^4\mathtt{b}, \mathtt{ac}^6\mathtt{b}, \mathtt{ac}^7\mathtt{b}, \mathtt{ac}^8\mathtt{b}\}$ is the subset of type-3 MAWs of $T$ of the form $\mathtt{a c}^k \mathtt{b}$.
We remark that the above strategy that is based on \bridge\ substrings of the string is not enough to enumerate all elements of $\mathcal{M}_3$,
since e.g. $\mathtt{ac}^3 \mathtt{e}$ and $\mathtt{bc}^2 \mathtt{b}$ are also type-3 MAWs in this running example.
This leads us to define the notion of \emph{combined \bridge s}:
A \bridge\ $a c^\ell b$ is a combined \bridge\ of $T$
if (1) $a c^\ell b$ is not a \bridge\ substring of $T$,
(2) $a c^i b'$ and $a' c^j b$ are \bridge\ substrings of $T$
with $b' \neq b$ and $a' \neq a$,
and (3) $\ell = \min\{i, j\}$.
Let $\mathcal{CB}_c(T)$ denote the set of combined bridges of $T$
with middle character $c$.

\begin{observation} \label{obs:type-3}
  A \bridge\ $ac^kb$ is in $\mathcal{M}_3$ iff
  $ac^kb \notin \mathcal{BS}_{acb}(T)$ and
  either
  (i) $ac^{k'}b \in \mathcal{BS}_{acb}(T)$ with $k' > k$ or
  (ii) $ac^{k'}b \in \mathcal{CB}_{c}(T)$ with $k' \geq k$.
\end{observation}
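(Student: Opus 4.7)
The plan is to unfold the definition of a type-3 MAW into three elementary occurrence conditions on $T$ and match each to the RLE-based notions just introduced. Specifically, I will use that $ac^kb \in \mathcal{M}_3$ iff (a)~$ac^kb$ does not occur in $T$, (b)~$ac^k$ occurs in $T$, and (c)~$c^kb$ occurs in $T$, since every other proper substring of $ac^kb$ is already a substring of either $ac^k$ or $c^kb$.

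For the forward direction, I would first observe that because $a \neq c$ and $b \neq c$, any occurrence of $ac^kb$ in $T$ must carry its internal $c^k$ as a maximal $c$-run bordered by $a$ on the left and $b$ on the right; hence (a) is equivalent to $ac^kb \notin \mathcal{BS}_{acb}(T)$, which immediately yields condition~(i) whenever $ac^kb \in \mathcal{M}_3$.

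For the backward direction I would split on how (b) and (c) can be witnessed. The condition that $ac^k$ occurs in $T$ is equivalent to the existence of a bridge substring $ac^id \in \mathcal{BS}_{acd}(T)$ with $i \ge k$ for some $d \neq c$, because $ac^k$ is a prefix of such a bridge; an analogous statement holds for $c^kb$ via some $a'c^jb$. If both (b) and (c) can be jointly witnessed by a single bridge $ac^\ell b \in \mathcal{BS}_{acb}(T)$ with $\ell \ge k$, we are in the gap regime captured by (i), understood alongside the remark preceding the observation (which provides $\ell_{\max} > k$ together with $ac^kb \notin \mathcal{BS}_{acb}(T)$). Otherwise, the witnesses are necessarily genuinely distinct bridges $ac^ib' \in \mathcal{BS}_{acb'}(T)$ with $b' \neq b$ and $a'c^jb \in \mathcal{BS}_{a'cb}(T)$ with $a' \neq a$, both of length at least $k$; setting $k' = \min\{i,j\} \ge k$, the Case~II hypothesis that no $\mathcal{BS}_{acb}(T)$-bridge reaches length $k$ forces $ac^{k'}b \notin \mathcal{BS}_{acb}(T)$, so all three defining requirements of a combined bridge are met, giving $ac^{k'}b \in \mathcal{CB}_c(T)$ and hence condition~(ii). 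The converse for each sub-case is a direct reversal: from a longer $\mathcal{BS}_{acb}(T)$-bridge one reads off $ac^k$ and $c^kb$ as substrings, and from the two flanking bridges $ac^ib'$, $a'c^jb$ in the definition of a combined bridge one again reads off both occurrences, while (a) is automatic either from (i) or from $k \le k'$ together with $ac^{k'}b \notin \mathcal{BS}_{acb}(T)$.

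The main obstacle I anticipate is handling cleanly the interaction between (i) and (ii) when $\mathcal{BS}_{acb}(T)$ is nonempty but its maximum exponent is strictly less than $k$: in that regime the gap reading of (i) is unavailable and the entire burden of certifying (b) and (c) must be pushed onto the combined-bridge definition, so the proof must be careful not to implicitly assume $\ell_{\max}$ for the triple $(a,c,b)$ is at least $k$ whenever (i) is invoked, and must verify that the minimum $k' = \min\{i,j\}$ lands outside $\mathcal{BS}_{acb}(T)$ rather than accidentally coinciding with an existing bridge substring.
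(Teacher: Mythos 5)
Your reduction of membership in $\mathcal{M}_3$ to the three conditions (a)~$ac^kb$ does not occur, (b)~$ac^k$ occurs, (c)~$c^kb$ occurs, together with the translation of (b) and (c) into bridge language (an occurrence of $ac^k$ is exactly a prefix of a bridge substring $ac^id$ with $i\ge k$, and symmetrically for $c^kb$), is the right skeleton, and is surely what the authors intend (the paper states this as an observation without proof). Your forward direction is sound: splitting on whether (b) and (c) admit a joint witness $ac^{\ell}b\in\mathcal{BS}_{acb}(T)$ with $\ell\ge k$ (which must then have $\ell>k$, the gap regime of~(i)) and otherwise assembling the combined bridge $ac^{k'}b$ with $k'=\min\{i,j\}$ is a complete argument for that implication.

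The gap is in your converse for case~(ii). You assert that (a) ``is automatic \ldots\ from $k\le k'$ together with $ac^{k'}b\notin\mathcal{BS}_{acb}(T)$.'' That inference is invalid: the set of exponents $\ell$ with $ac^{\ell}b\in\mathcal{BS}_{acb}(T)$ is an arbitrary finite set, not downward closed, so a \emph{shorter} bridge $ac^kb$ may occur in $T$ even though the combined bridge $ac^{k'}b$ does not. Concretely, for $T=\mathtt{\$ac^3dec^5bacb\$}$ one has $\mathtt{ac^3b}\in\mathcal{CB}_{\mathtt{c}}(T)$ (via $\mathtt{ac^3d}$ and $\mathtt{ec^5b}$), so condition~(ii) holds for $k=1$ with $k'=3$, yet $\mathtt{acb}$ occurs in $T$ and is not a MAW. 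Thus ``(ii) $\Rightarrow$ MAW'' cannot be completed as written; the observation only becomes true if the non-occurrence requirement $ac^kb\notin\mathcal{BS}_{acb}(T)$ is read as a standing hypothesis in case~(ii) as well --- equivalently, if (ii) is restricted to $\ell_{\max}<k\le k'$, which is exactly how the paper's Second Stage uses it (Case~2 reports only exponents strictly above the value $\hat{\ell}$ recorded in the trie). Your closing ``obstacle'' paragraph shows you sensed the danger, but you located it at whether $k'$ itself collides with $\mathcal{BS}_{acb}(T)$ --- which your second sub-case hypothesis does rule out --- rather than at the smaller exponents $k<k'$, which it does not. The same caveat applies, more mildly, to your converse for~(i): as you acknowledge, it needs the standing assumption $k<\ell_{\max}$ imported from the remark preceding the observation, since $ac^kb\notin\mathcal{BS}_{acb}(T)$ alone certifies neither (b) nor (c).
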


The type-3 MAWs $\mathtt{ac}^3 \mathtt{e}$ and $\mathtt{bc}^2 \mathtt{b}$
in the running example belong to Case (ii),
since $\mathtt{ac}^3 \mathtt{e}$ is in $\mathcal{CB}_{\mathtt{c}}(T)$
and $\mathtt{bc}^3 \mathtt{b}$ is in $\mathcal{CB}_{\mathtt{c}}(T)$,
respectively.

Observation~\ref{obs:type-3} leads us to the following idea:
For each character $c \in \Sigma'_T$,
let $\mathcal{BS}_c(T) = \bigcup_{a, b \in \Sigma'} \mathcal{BS}_{acb}(T)$ be the ordered set of \bridge\ substrings $z$ of $T$ with $R(z) = 3$ whose middle characters are all $c$.
We suppose that the elements of $\mathcal{BS}_c(T)$ are sorted in increasing order of the exponents $\ell$ of the middle character $c$.
See Figure~\ref{data_D3} for a concrete example for $\mathcal{BS}_c(T)$.

Given $\mathcal{BS}_c(T)$, we can enumerate all type-3 MAWs in $\mathcal{M}_2$ by incrementally constructing a trie $\mathsf{T}_c$ of bigrams.
Initially, $\mathsf{T}_c$ is a trie only with the root.
The algorithm has two stages:
\begin{description}
\item[First Stage:] The first stage deals with Case (i) of Observation~\ref{obs:type-3}. We perform a linear scan over $\mathcal{BS}_c(T)$. When we encounter a \bridge\ substring $ac^\ell b$ from $\mathcal{BS}_c(T)$, we traverse the trie $\mathsf{T}_c$ with the corresponding bigram $ab$.
  \begin{enumerate}
    \item If $ab$ is not in the current trie, then $ac^kb$ for all $1 \leq k < \ell$ are MAWs in $\mathcal{M}_3$.
      After reporting all these MAWs, we create a node $v$ representing $ab$ and store $\ell$.
    \item If $ab$ is already in the current trie, then the value $\hat{\ell}$ stored in the node $v$ which represents $ab$ is less than $\ell$. Then, $ac^k b$ for all $\hat{\ell} < k < \ell$ are MAWs in $\mathcal{M}_3$.
      After reporting all these MAWs, we update the value in $v$ with $\ell$.
  \end{enumerate}
The final trie $\mathcal{T}_c$ after the first stage will be unchanged in the following second stage.

\item[Second Stage:] The second stage deals with Case (ii) of Observation~\ref{obs:type-3}. For each character $a \in \Sigma'_T \setminus \{c\}$, we store the left component $ac^i$ of a \bridge\ substring such that $i$ is the largest exponent of the \bridge\ substrings beginning with $ac$. Let $\mathcal{L}_c$ be the set of $ac^i$'s for all characters $a \in \Sigma'_T \setminus \{c\}$. Similarly, let $\mathcal{R}_c$ be the set of the right components $c^j b$ for all characters $b \in \Sigma'_T \setminus \{c\}$, where $j$ is the largest exponent of the \bridge\ substrings ending with $cb$.
See Figure~\ref{data_D3} for a concrete example for $\mathcal{L}_c$ and $\mathcal{R}_c$.

For each pair of $ac^i \in \mathcal{L}_c$ and $c^j b \in \mathcal{R}_c$, let $ac^\ell b$ be the combined \bridge\ with $\ell = \min\{i, j\}$.
\begin{enumerate}
  \item If $ab$ is not in the trie $\mathsf{T}_c$, then $ac^k b$ for all $1 \leq k \leq \ell$ are MAWs in $\mathcal{M}_3$.
  \item If $ab$ is in the trie $\mathsf{T}_c$, then let $\hat{\ell}$ be the value stored in the node that represents $ab$.
    \begin{enumerate}
    \item If $\hat{\ell} < \ell$, then $ac^kb$ for all $\hat{\ell} < k \leq \ell$ are MAWs in $\mathcal{M}_3$.
    \item If $\hat{\ell} \geq \ell$, then we do nothing.
    \end{enumerate}
\end{enumerate}

\end{description}

\begin{figure}[!t]
\centering
\includegraphics[scale=0.5]{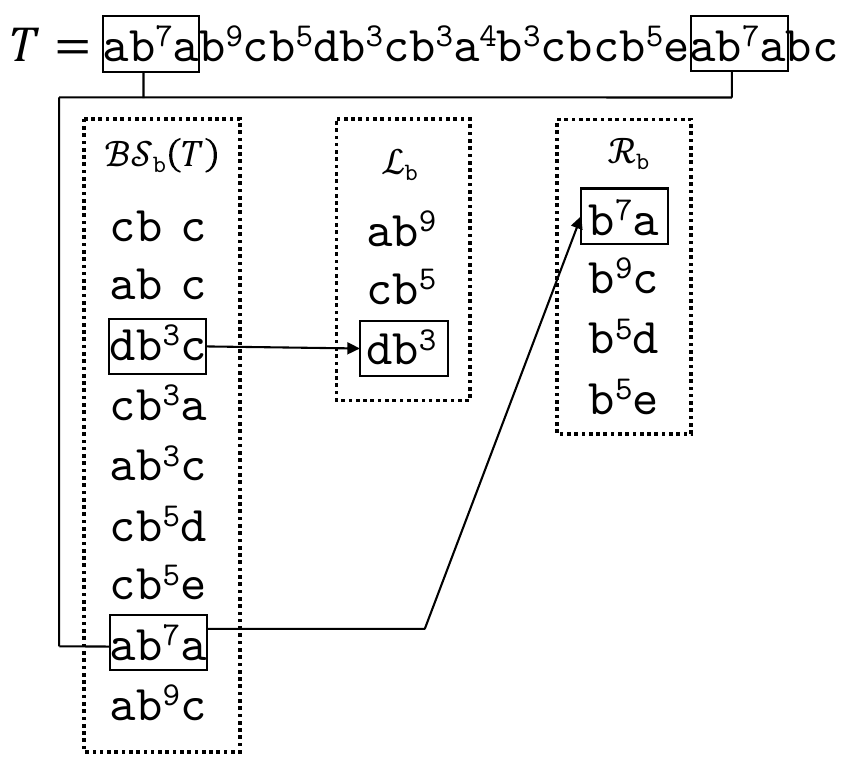}
\caption{$\mathcal{BS}_{\mathtt{b}}$, $\mathcal{L}_{\mathtt{b}}$,
  and $\mathcal{R}_{\mathtt{b}}$ for string $T = \mathtt{ab}^7\mathtt{ab}^9\mathtt{cb}^5\mathtt{db}^3\mathtt{cb}^3\mathtt{a}^4\mathtt{b}^3\mathtt{cbcb}^5\mathtt{eab}^7\mathtt{abc}$ and character $\mathtt{b}$.}
\label{data_D3}
\end{figure}

We have the following lemma:
\begin{lemma} \label{M3_expression}
There exists a data structure $\mathsf{D}_3$ of $O(m)$ space that can output each MAW in $\mathcal{M}_3$ in amortized $O(1)$ time.  
\end{lemma}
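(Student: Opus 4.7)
The plan is to formalize the two-stage algorithm sketched above and analyze its space and time separately. The data structure $\mathsf{D}_3$ stores $\rle(T)$ together with, for each $c \in \Sigma'$, the list $\mathcal{BS}_c(T)$ of length-$3$ bridge substrings with middle character $c$ (sorted by the exponent of the $c$-run), the sets $\mathcal{L}_c$ and $\mathcal{R}_c$, and the trie $\mathsf{T}_c$ of bigrams, all built by a single scan of $\rle(T)$ in $O(m)$ time.

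For the space bound, I would argue that each auxiliary collection contributes $O(m)$ when summed over $c$. Every element of $\mathcal{BS}_c(T)$ corresponds to a distinct maximal $c$-run of $T$ bounded on both sides, so $\sum_c |\mathcal{BS}_c(T)| \le m$. Every entry $ac^i \in \mathcal{L}_c$ is witnessed by a distinct adjacent RLE character pair $(a,c)$, giving $\sum_c |\mathcal{L}_c| \le m-1$ and symmetrically $\sum_c |\mathcal{R}_c| \le m-1$. The trie $\mathsf{T}_c$ has at most one node per bigram appearing in $\mathcal{BS}_c(T)$, hence $\sum_c |\mathsf{T}_c| \le \sum_c |\mathcal{BS}_c(T)| \le m$. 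Thus $\mathsf{D}_3$ occupies $O(m)$ space.

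For correctness I would appeal directly to Observation~\ref{obs:type-3}. Stage~1 sweeps through $\mathcal{BS}_c(T)$ and maintains, at the trie node for each encountered bigram $ab$, the largest exponent $\hat{\ell}$ for which $ac^{\hat\ell}b$ has been seen so far, emitting every index $k$ sandwiched between consecutive seen exponents; these are precisely the Case~(i) MAWs. After Stage~1, $\hat{\ell}$ at node $ab$ equals the maximum exponent of any real bridge $ac^\ell b$ in $T$ (with $\hat{\ell}=0$ when $ab \notin \mathsf{T}_c$). In Stage~2, for each pair $(ac^i, c^jb) \in \mathcal{L}_c \times \mathcal{R}_c$ the combined bridge has $\ell = \min(i,j)$, and the MAWs of Case~(ii) not already reported in Stage~1 are exactly $\{ac^kb : \hat{\ell} < k \le \ell\}$, which is what the stage emits. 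Hence $\mathsf{D}_3$ outputs each element of $\mathcal{M}_3$ exactly once.

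For the running time I would account for work per $c$. Stage~1 performs $O(|\mathcal{BS}_c(T)|)$ trie operations plus the emitted MAWs. Stage~2 inspects $|\mathcal{L}_c|\cdot|\mathcal{R}_c|$ pairs; I call a pair \emph{unproductive} if it emits no MAW, which forces $ab \in \mathsf{T}_c$ with stored exponent $\hat{\ell} \ge \ell$, bounding the unproductive pairs by $|\mathsf{T}_c|$. Summing over $c$, the total work becomes $O(m + |\mathcal{M}_3|)$. If $|\mathcal{M}_3| < m$, a single preprocessing run of the algorithm materializes the at most $m$ MAWs explicitly within the $O(m)$ budget so that reporting then takes $O(|\mathcal{M}_3|)$ time; otherwise the online algorithm already achieves the same asymptotic bound, i.e., amortized $O(1)$ per reported MAW. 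I expect the main obstacle to be the Stage~2 accounting, since the naive pair enumeration is $\Omega(m^2)$ in the worst case, and the amortization that charges unproductive pairs to $|\mathsf{T}_c|$ and productive pairs to the emitted MAWs is the crucial argument that yields a linear overall time.
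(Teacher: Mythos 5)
Your proposal is correct and follows essentially the same route as the paper: the same per-character lists $\mathcal{BS}_c(T)$, $\mathcal{L}_c$, $\mathcal{R}_c$ and trie $\mathsf{T}_c$, correctness via Observation~\ref{obs:type-3}, the $O(m)$ space bound by summing $|\mathcal{BS}_c(T)|$ over all $c$, and the fallback of storing all type-3 MAWs explicitly when $|\mathcal{M}_3| \in O(m)$. The only (harmless) divergence is in the Stage-2 amortization: you charge each unproductive pair to one of the at most $|\mathsf{T}_c|$ bigrams already in the trie, giving an $O(m)$ total, whereas the paper charges it to a MAW reported for the same bigram in the first stage; both yield the same $O(m + |\mathcal{M}_3|)$ bound.
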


\begin{proof}
Analogously to the case of $\mathcal{M}_2$,
if $|\mathcal{M}_2| \in O(m)$, then we can explicitly store all type-3 MAWs in $O(m)$ space.

In what follows, we consider the case where $|\mathcal{M}_2| \in \Omega(m)$.
For each character $c \in \Sigma'_T$,
we perform the above algorithm on $\mathcal{BS}_c(T)$.
The correctness of the algorithm follows from Observation~\ref{obs:type-3}.
Since $\sum_{c \in \Sigma'_T}|\mathcal{BS}_c(T)| \in O(m)$,
the total space requirement of the data structure for all characters in $\Sigma'_T$ is $O(m)$.
Let us consider the time complexity.
The first stage takes $O(m + f) \subseteq O(|\mathcal{M}_3|)$ time,
where $f$ is the number of MAWs reported in the first stage for all characters in $\Sigma'_T$.
The second stage takes $O(|\mathcal{L}_c|\cdot|\mathcal{R}_c|)$ time
for each $c \in \Sigma'_T$.
For each combined \bridge\ $ac^\ell b$ created from $\mathcal{L}_c$ and $\mathcal{R}_c$,
when it falls into Case 1 or Case 2-a, then at least one MAW is reported.
When it falls into Case 2-b, then no MAW is reported.
However, in Case 2-b, there has to be a MAW $a c^k b$ that was reported in the first stage.
Since we test at most one combined \bridge\ for each pair of characters $a,b$,
a MAW $a c^k b$ reported in the first stage is charged at most once.
Therefore, the second stage takes a total of $O(\sum_{c \in \Sigma'_T}|\mathcal{L}_c|\cdot|\mathcal{R}_c|) \subseteq O(|\mathcal{M}_3|)$ time.
%
\end{proof}


\subsection{Representation for $\mathcal{M}_4$}

Recall that $|\mathcal{M}_4| \in O(m^2)$ and this bound is tight in the worst case.
Therefore we cannot store all elements of $\mathcal{M}_4$ explicitly,
as our goal is an $O(m)$-space representation of MAWs.
Nevertheless, the following lemma holds:

\begin{lemma}
\label{lem:M4_expression_sub}
There exists a data structure $\mathsf{D}_4$ of $O(m)$ space that can output each MAW in $\mathcal{M}_4$ in $O(1)$ amortized time.
\end{lemma}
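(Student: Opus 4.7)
The plan is to reuse the tree of bridge substrings from Section~\ref{subsec:upperbound_M4} as the backbone of an $O(m)$-space data structure, and enumerate MAWs node-by-node. The nodes of the tree are all bridges $w$ of $T$ with $|K(w)| \geq 2$ together with those in $B_2 \cup B_3$; each node $w$ stores its set $K(w)$ as a list of quadruples $(a,i,j,b)$ encoding the child bridges $a\alpha^i u \beta^j b$ with $w = \alpha u \beta$, in the compact form of Lemma~\ref{lem:maw_in_constant_space}. By Observation~\ref{m4_block} and Lemma~\ref{M4_upper}, the tree has $O(m)$ nodes and $\sum_w |K(w)| = \mathcal{X} \in O(m)$, so the total space is $O(m)$. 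Alongside $K(w)$ we maintain, for every distinct prefix $(a,i)$ appearing in $K(w)$, the value $J(a,i) = \max\{j' : (a,i,j',b') \in K(w)\}$; symmetrically $I(j,b) = \max\{i' : (a',i',j,b) \in K(w)\}$ for every distinct suffix; and a hash table for $O(1)$-time membership tests in $K(w)$.

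The enumeration is driven by the following characterization, which is immediate from Observation~\ref{obs:M4_count}: a quadruple $(a,i,j,b)$ is a MAW with $z^{(1)} = w$ iff $J(a,i) \geq j$, $I(j,b) \geq i$, and $(a,i,j,b) \notin K(w)$. For each black node $w$ and each distinct prefix $(a,i)$ of $K(w)$, we scan over the set of valid suffixes $S(a,i) = \{(j,b) : j \leq J(a,i),\ I(j,b) \geq i\}$ where $(j,b)$ ranges over the distinct suffixes of $K(w)$; we report $a\alpha^i u \beta^j b$ whenever $(a,i,j,b) \notin K(w)$, and skip otherwise. The key counting observation is that every element of $S(a,i)$ either produces a MAW or is itself an edge of $K(w)$, since an edge $(a,i,j,b) \in K(w)$ automatically satisfies $J(a,i) \geq j$ and $I(j,b) \geq i$; hence the total scan length over the distinct prefixes of $w$ is $|K(w)| + |\mathcal{M}_4(w)|$. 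Summed over all black nodes, the total work is $O(\mathcal{X} + |\mathcal{M}_4|) = O(m + |\mathcal{M}_4|)$. When $|\mathcal{M}_4| \in O(m)$ we instead store all type-4 MAWs explicitly (still $O(m)$ space); in both regimes the overall output time is $O(|\mathcal{M}_4|)$, i.e., $O(1)$ amortized per MAW.

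The main obstacle is realizing the scan over $S(a,i)$ in time proportional to $|S(a,i)|$, since its defining condition is a two-dimensional dominance that mixes $j \leq J(a,i)$ with $I(j,b) \geq i$. The plan is to process the distinct prefixes of $K(w)$ in decreasing order of $i$ while maintaining a doubly-linked list of active suffixes sorted by $j$; each suffix $(j,b)$ enters the list once, at the moment the threshold $i$ drops to $I(j,b)$, and the scan for prefix $(a,i)$ walks the list from its head until $j$ first exceeds $J(a,i)$. Achieving constant-amortized insertion into the $j$-sorted list, rather than $O(\log m)$, will be handled by presorting the suffixes globally by $j$ and by $I$ via bucket sort over the $O(m)$ elements appearing across the tree, and weaving the two orderings together so that each insertion reduces to a constant-time splice.
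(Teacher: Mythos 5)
Your proposal matches the paper's approach: for each bridge $w \in \mathcal{W}$ the paper builds a bipartite graph on the distinct left parts $(a,i)$ and right parts $(j,b)$ of $K(w)$, characterizes the type-4 MAWs $z$ with $z^{(1)}=w$ as exactly the non-edges dominated by your $J(a,i)$ and $I(j,b)$ (Lemma~\ref{M4_mawequal}), and enumerates them by a sweep over the left vertices in which every visited right vertex is charged to an edge, an output, or a one-time list update, giving $O(\mathcal{X}+|\mathcal{M}_4|)$ total time with the same $|\mathcal{M}_4|\in O(m)$ fallback. The one step you leave as a plan rather than a proof --- constant-amortized \emph{insertion} into the $j$-sorted list of active suffixes --- is sidestepped in the paper by sweeping the left vertices in \emph{increasing} order of $i$ and \emph{deleting} a right vertex $(j,b)$ the first time $I(j,b)$ falls below the current $i$ (charging that visit to the deletion), so only $O(1)$-time doubly-linked-list deletions are ever needed; flipping your sweep direction this way removes the need for the presorting-and-splicing argument entirely.
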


Our data structure $\mathsf{D}_4$ is based on the discussion in Section~\ref{subsec:upperbound_M4}.
We consider the following bipartite graph $G_w = (V_L \cup V_R, E)$ for any \bridge\ $w \in \mathcal{W}$.
We can identify each \bridge\ $a \alpha^i u \beta^j b \in \Exp(w)$ by representing the bridge as a 4-tuple $(a, i, j, b)$.
Let $F_{w}$ be the set of 4-tuples which represents all elements in $\Exp(w)$.
Two disjoint sets $V_L, V_R$ of vertices and set $E$ of edges are defined as follows:
\begin{eqnarray*}
  V_L &=& \{(a, i) \mid \exists (a, i, j, b) \in F_w\}, \\
  V_R &=& \{(j, b) \mid \exists (a, i, j, b) \in F_w\}, \\
  E &=& \{((a, i), (j, b)) \mid \exists (a, i, j, b) \in F_w\}.
\end{eqnarray*}
\ynnote*{added}{%
$V_L$ (resp.~$V_R$) represents the set of the left (resp.~right) parts of bridges in $\mathcal{W}$.
For each edge in $E$ represents a bridge in $\mathcal{W}$.
This implies that $|E| = |\Exp(w)|$.
}%
Assume that all vertices in $V_L$ (resp.~$V_R$) are sorted in non-decreasing order w.r.t. the value $i$ (resp.~$j$) which represents the exponent of corresponding run.
For any $k \in [1, |V_L|]$ and $k' \in [1, |V_R|]$, 
$v_L(k) = (\mathsf{c}_L(k), \mathsf{e}_L(k))$ denotes the $k$-th vertex in $V_L$, and
$v_R(k') = (\mathsf{c}_R(k'), \mathsf{e}_R(k'))$ denotes the $k'$-th vertex in $V_R$.
For any vertex $v_L(k) \in V_L$ and $v_R(k') \in V_R$, we also define 
\[
  E^{LR}_{max}(k) = \max \{\mathsf{e}_R(i) \mid \exists (v_L(k), v_R(i)) \in E\},
\]
\[
  E^{RL}_{max}(k') = \max \{\mathsf{e}_L(i) \mid \exists (v_R(i), v_R(k)) \in E\}.
\]
Figure~\ref{data_D4} gives an illustration for this graph.
\begin{figure}[tb]
\begin{center}
\includegraphics[scale=0.5]{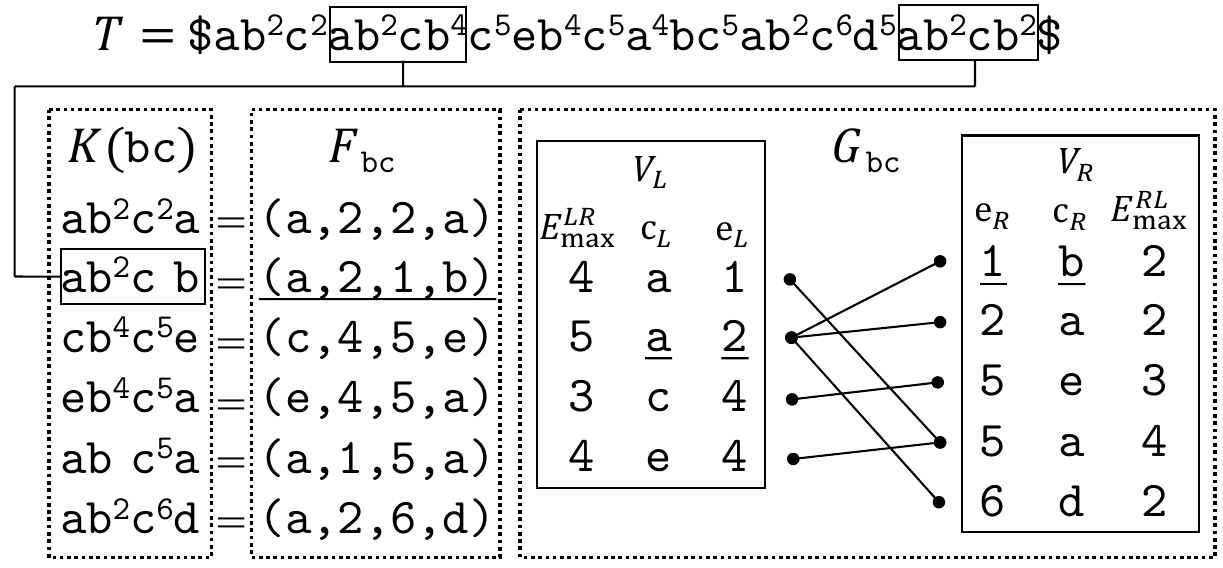}
\caption{
  This figure shows $G_{\texttt{bc}}$ for $T = \mathtt{\$ab^2c^2ab^2cb^4c^5eb^4c^5a^4bc^5ab^2c^6d^5ab^2cb^2\$}$.
  For a bridge $\mathtt{bc}$, $\Exp(\mathtt{bc})$ has 6 bridges.
  $F_{\mathtt{bc}}$ contains 6 tuples which represents all bridges in $\Exp(\mathtt{bc})$.
  For instance, a bridge $\mathtt{ab^2cb} = (\mathtt{a},2,1,\mathtt{b})$ 
  where the first character is $\mathtt{a}$, the exponent of the second run is 2, 
  the exponent of the second last run is 1, and the last character is $\mathtt{b}$.
  $V_L$ is the set of pairs by the left-half of elements in $F_{\mathtt{bc}}$.
  In this example, $V_L$ has 4 vertices $\{(\mathtt{a},1),(\mathtt{a},2),(\mathtt{c},4),(\mathtt{e},4)\}$ which are sorted in non-decreasing order of the second key (representing its exponent).
  $V_R$ is the symmetric set for the right parts.
  Each bridge corresponds to an edge.
  For example, the second bridge $\mathtt{ab^2cb}$ in the figure corresponds to the edge from the second vertex $(\mathtt{a},2)$ in $V_L$ to the first vertex $(1, \mathtt{b})$ in $V_R$.
  Since the number of bridges in $\Exp(\mathtt{bc}) (F_{\mathtt{bc}})$ is 6, the graph has 6 edges.
}
\label{data_D4}
\end{center}
\end{figure}
Due to Observation~\ref{obs:M4_count}, each MAW $z$ of type 4 corresponds to an element of $\Exp(w) \times \Exp(w)$ where $z^{(1)} = w$.
By this idea, we detect each MAW as a pair of vertices in $V_L \times V_R$ which is not an edge in $E$.
The following lemma explains all MAWs which can be represented by the graph.
\begin{lemma}
\label{M4_mawequal}
For any vertices $v_L(k) \in V_L$ and $v_R(k') \in V_R$ of $G_{\alpha u \beta}$,
the string \linebreak
$\mathsf{c}_L(k)\alpha^{\mathsf{e}_L(k)} u \beta^{\mathsf{e}_R(k')}\mathsf{c}_R(k')$ is a MAW iff 
the following three conditions hold (see also Figure~\ref{fig:M4-rep} for an illustration):
\begin{itemize}
  \item $(v_L(k), v_R(k')) \notin E$,
  \item $E^{LR}_{max}(k) \geq \mathsf{e}_R(k')$, and 
  \item $E^{RL}_{max}(k') \geq \mathsf{e}_L(k)$.
\end{itemize}
\end{lemma}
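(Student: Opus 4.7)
My plan is to reduce the property ``$z = \mathsf{c}_L(k)\alpha^{\mathsf{e}_L(k)} u \beta^{\mathsf{e}_R(k')}\mathsf{c}_R(k')$ is a MAW'' to three occurrence-in-$T$ conditions and then translate each into the corresponding graph condition of the lemma. First I would observe that $z$ is automatically a bridge (since $\mathsf{c}_L(k) \neq \alpha$ and $\mathsf{c}_R(k') \neq \beta$ by the definitions of $V_L$ and $V_R$) with $z^{(1)} = w$, that $R(z) = 4 + R(u) \geq 4$, and that $u[1] \neq \alpha$ and $u[|u|] \neq \beta$ are inherited from $w$ itself being a bridge; so if $z$ is absent and minimally so, $z$ necessarily belongs to $\mathcal{M}_4$.

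Next I would invoke the standard fact that $z$ is a MAW for $T$ iff $z$ is absent while both length-$(|z|-1)$ borders $z[1..|z|-1]$ and $z[2..|z|]$ occur in $T$, since every shorter proper substring of $z$ is contained in one of those borders. Thus the task reduces to matching up (a) ``$z$ does not occur in $T$'', (b) ``$\mathsf{c}_L(k)\alpha^{\mathsf{e}_L(k)} u \beta^{\mathsf{e}_R(k')}$ occurs in $T$'', and (c) ``$\alpha^{\mathsf{e}_L(k)} u \beta^{\mathsf{e}_R(k')} \mathsf{c}_R(k')$ occurs in $T$'' with the three stated graph conditions, in that order.

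For (a), because $z$ is itself a bridge, any occurrence of $z$ in $T$ would make $z$ a bridge substring belonging to $K(w)$, which by the construction of $F_w$ and $E$ is exactly $(v_L(k), v_R(k')) \in E$. For (b), the crucial observation is that the border occurs in $T$ iff $T$ contains a bridge substring $\mathsf{c}_L(k)\alpha^{\mathsf{e}_L(k)} u \beta^{e_R''} c'' \in K(w)$ with $e_R'' \geq \mathsf{e}_R(k')$ and $c'' \neq \beta$: in any such occurrence the $\alpha$-run has length exactly $\mathsf{e}_L(k)$ because $\mathsf{c}_L(k) \neq \alpha$ and $u[1] \neq \alpha$ prevent it from leaking, while the trailing $\beta$-run, of some length $e_R'' \geq \mathsf{e}_R(k')$, must eventually terminate at some $c'' \neq \beta$ because of the right-end sentinel $\$$; the converse is immediate. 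Translated to the graph, (b) holds iff some edge from $v_L(k)$ reaches a $v_R(i)$ with $\mathsf{e}_R(i) \geq \mathsf{e}_R(k')$, equivalently $\mathsf{e}_R(E^{LR}_{max}(k)) \geq \mathsf{e}_R(k')$. Condition (c) is handled symmetrically and yields the third inequality.

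The main obstacle will be making this extension argument for (b) and (c) airtight, since it is the step that actually converts ``occurs in $T$'' into a statement about edges of $G_w$. The argument rests entirely on $w$ being a bridge (so the $\alpha$- and $\beta$-runs cannot leak into $u$) and on the terminal sentinels of $T$ (so that one-sided extensions of either border terminate inside $T$ at some character distinct from $\beta$ and $\alpha$, respectively).
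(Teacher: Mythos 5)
Your proposal is correct and follows essentially the same route as the paper's proof: reduce ``$z$ is a MAW'' to absence of $z$ plus occurrence of its two maximal proper substrings, and identify these respectively with the no-edge condition and the two exponent inequalities via $E^{LR}_{max}$ and $E^{RL}_{max}$. The paper states these three correspondences without justification, whereas you supply the supporting details (the run cannot leak into $u$ because $w$ is a bridge, and the terminal $\$$ guarantees every one-sided extension closes off into a bridge of $K(w)$), which is exactly what is needed to make the terse published argument airtight.
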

\begin{proof}
If $(v_L(k), v_R(k')) \notin E$, $\mathsf{c}_L(k)\alpha^{\mathsf{e}_L(k)}u \beta^{\mathsf{e}_R(k')}\mathsf{c}_R(k')$ is an absent word.
$E^{LR}_{max}(k) \geq \mathsf{e}_R(k')$ and $E^{RL}_{max}(k') \geq \mathsf{e}_L(k)$ implies that 
$\mathsf{c}_L(k)\alpha^{\mathsf{e}_L(k)}u \beta^{\mathsf{e}_R(k')}$ and $\alpha^{\mathsf{e}_L(k)}u \beta^{\mathsf{e}_R(k')}\mathsf{c}_R(k')$ occur in the string.
Thus $\mathsf{c}_L(k)\alpha^{\mathsf{e}_L(k)}u \beta^{\mathsf{e}_R(k')}\mathsf{c}_R(k')$ is a MAW.

On the other hand, if $(v_L(k), v_R(k')) \in E$, $\mathsf{c}_L(k)\alpha^{\mathsf{e}_L(k)}u \beta^{\mathsf{e}_R(k')}\mathsf{c}_R(k')$ occurs in the text.
$E^{LR}_{max}(k) < \mathsf{e}_R(k')$ implies that $\mathsf{c}_L(k)\alpha^{\mathsf{e}_L(k)}u \beta^{\mathsf{e}_R(k')}$ does not occur in the string.
$E^{RL}_{max}(k') < \mathsf{e}_L(k)$ implies that $\alpha^{\mathsf{e}_L(k)}u \beta^{\mathsf{e}_R(k')}\mathsf{c}_R(k'))$ does not occur in the string.
Thus all three conditions hold if $\mathsf{c}_L(k)\alpha^{\mathsf{e}_L(k)} u \beta^{\mathsf{e}_R(k')}\mathsf{c}_R(k')$ is a MAW.
\end{proof}

\begin{proof}[Proof of Lemma~\ref{lem:M4_expression_sub}]
Let $x$ be the number of outputs.
If $x<m$, we can just store all the MAWs themselves.
Assume that $x \in \Omega(m)$.

For all bridge $w = \alpha u \beta \in \mathcal{W}$, 
$G_{w}$ represents all MAWs which correspond to elements in $\Exp(w) \times \Exp(w)$.
Our data structure $\mathsf{D}_4$ consists of $G_w$ for any $w \in \mathcal{W}$.
It is clear that $G_w$ can be stored in $O(|\Exp(w)|)$ space.
This implies that the size of $\mathsf{D}_4$ is linear in $\mathcal{X}$,
namely, $\mathsf{D}_4$ can be stored in $O(m)$ space (Lemma~\ref{M4_upper}).

We can output all MAWs which are represented by $G_w$ based on Lemma~\ref{M4_mawequal} (see Algorithm~\ref{alg_M4_enumerate}).
For the $k$-th vertex $v_L(k)$, $C$ represents all vertices $v_R(k')$ in $V_B$ 
such that $(v_L(k), v_R(k')) \notin E$ and $E^{RL}_{max}(k') \geq \mathsf{e}_L(k)$ (the first and third condition in Lemma~\ref{M4_mawequal}).
For each vertex in $C$, if $E^{LR}_{max}(k) \geq \mathsf{e}_R(k')$ (the second condition in Lemma~\ref{M4_mawequal}), 
the algorithm outputs a MAW $\mathsf{c}_L(k)\alpha^{\mathsf{e}_L(k)} u \beta^{\mathsf{e}_R(k')}\mathsf{c}_R(k')$.
Then the running time of our algorithm is $O(x + \sum_{w \in \mathcal{W}}|G_w|) \subseteq O(x + m) = O(x)$, since $x \in \Omega(m)$.
\end{proof}

\begin{figure}[tb]
\begin{center}
\includegraphics[scale=0.5]{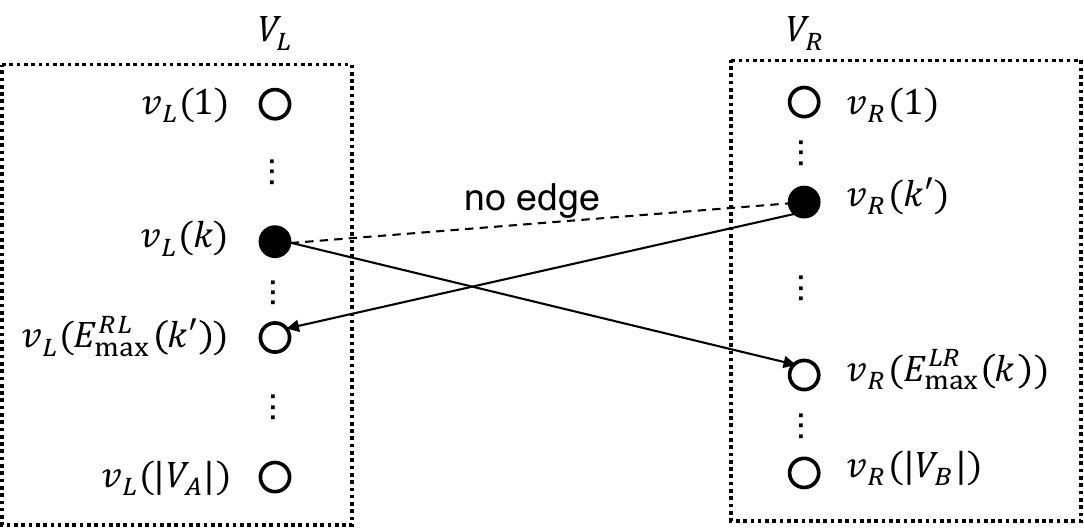}
\caption{
  This is an illustration for Lemma~\ref{M4_mawequal}.
  For the $k$-th vertex $v_L(k) \in V_L$ and $k'$-th vertex $v_R(k') \in V_R$,
  this graph satisfies the three conditions of the lemma.
}
\label{fig:M4-rep}
\end{center}
\end{figure}

\begin{figure}[!t]
\begin{algorithm}[H]
    \caption{Compute all MAWs in $\mathcal{M}_4$}
    \label{alg_M4_enumerate}
    \begin{algorithmic}[1] 
    \renewcommand{\algorithmicrequire}{\textbf{Input:}}
    \renewcommand{\algorithmicensure}{\textbf{Output:}}
    \REQUIRE bipartite graph $G_{\alpha u \beta} = (V_L,V_R,E)$
    \ENSURE all MAWs in $\mathcal{M}_4$ that are associated by $\alpha u \beta$, $a\alpha^{k_1} u \beta^{k_2}b$ for $a,b \in \Sigma, k_1,k_2 \in \mathbb{N}$
    \STATE $C_R \leftarrow V_R$
    \FOR{ each $v_L(k) \in V_L$ }
    \STATE $C = \{ v_R(k') \in C_R \mid \mathsf{e}_R(k') \leq E^{LR}_{max}(k) \} \setminus \{ v \mid (v_L(k), v) \in E \}  $
    \FOR{ each $v_R(k') \in C$ }
    \IF{$E^{RL}_{max}(k') \geq \mathsf{e}_L(v_L(k))$} 
    \STATE output $\mathsf{c}_L(k)\alpha^{\mathsf{e}_L(k)} u \beta^{\mathsf{e}_R(k')}\mathsf{c}_R(k')$
    \ELSE
    \STATE $C_R \leftarrow C_R \setminus \{ v_R(k') \}$
    \ENDIF
    \ENDFOR
    \ENDFOR
    \end{algorithmic}
\end{algorithm}
\end{figure}

\subsection{Representation for $\mathcal{M}_5$}

\begin{lemma}
\label{M5_expression}
There exists a data structure of size $O(m)$ that outputs each element of $\mathcal{M}_5$ in $O(1)$ time.
\end{lemma}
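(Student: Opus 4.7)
The plan is to exploit the tight upper bound $|\mathcal{M}_5| \in O(m)$ already established by Lemma~\ref{M5_number}. Since each MAW can be encoded as a constant-size quadruple from which $\rle(w)$ is recoverable (Lemma~\ref{lem:maw_in_constant_space}), the data structure $\mathsf{D}_5$ will simply be the explicit array of quadruples representing every element of $\mathcal{M}_5$. This array occupies $O(|\mathcal{M}_5|) \subseteq O(m)$ space, and a single sequential scan emits each stored element in $O(1)$ worst-case time per output.

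The only substantive step is to argue that this list can be populated. Following the argument in the proof of Lemma~\ref{M5_number}, every type-5 MAW has one of the two symmetric forms $a^{i+1}vb$ (with $v[1]\neq a$ and $i\geq 1$) or $avb^{i+1}$ (with $v[|v|]\neq b$ and $i\geq 1$), and each such MAW is associated with a unique position $k$ in $T$ satisfying $T[k]\neq T[k+1]$. Since these positions are exactly the $m$ run-boundary positions recorded by $\rle(T)$, I iterate over them and, for each boundary, identify the at most one type-5 MAW of each sub-type anchored there; each discovered MAW is recorded as its quadruple and appended to $\mathsf{D}_5$. The two sub-types are handled symmetrically.

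The main obstacle is not really algorithmic: Lemma~\ref{M5_number} has already done the combinatorial heavy lifting by charging every type-5 MAW to a distinct run boundary, so the size bound is automatic. What remains is a bookkeeping step — locating the candidate MAW at each boundary — which can be carried out with standard substring machinery over $\rle(T)$ (for instance the truncated RLE suffix array mentioned in the introduction). However, the detailed construction procedure is not required for the size and query-time guarantees asserted by the lemma; those follow immediately from combining Lemma~\ref{M5_number} with Lemma~\ref{lem:maw_in_constant_space}.
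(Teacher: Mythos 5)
Your proposal is correct and follows essentially the same route as the paper: both combine the $|\mathcal{M}_5| \in O(m)$ bound from Lemma~\ref{M5_number} with the constant-size quadruple encoding of Lemma~\ref{lem:maw_in_constant_space} to store all type-5 MAWs explicitly and emit each in $O(1)$ time. Your additional remarks on how to populate the list go beyond what the lemma requires (and what the paper proves here), but you correctly note that the construction is not part of the claimed guarantees.
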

\begin{proof}
  By Lemma~\ref{M5_number}, $|\mathcal{M}_5| \in O(m)$.
  Recall that an element of $M_5$ can be as long as $O(n)$.
  However, using Lemma~\ref{lem:maw_in_constant_space}
  we can represent and store all elements in $\mathcal{M}_5$
  in a total of $O(m)$ space.
  It is trivial that each stored element can be output in $O(1)$ time.
\end{proof}

\section{Conclusions and open questions}

Minimal absent words (MAWs) are combinatorial string objects that can be used in applications such as data compression (anti-dictionaries) and bioinformatics.
In this paper, we considered MAWs for a string $T$ that is described by its run-length encoding (RLE) $\rle(T)$ of size $m$.
We first analyzed the number of MAWs for a string $T$ in terms of its RLE size $m$, by dividing the set $\MAW(T)$ of all MAWs for $T$ into five disjoint types.
Albeit the number of MAWs of some types is superlinear in $m$,
we devised a compact $O(m)$-space representation for $\MAW(T)$
that can output all MAWs in output-sensitive $O(|\MAW(T)|)$ time.

We would like to remark that our $O(m)$-space representation
can be built in $O(m \log m)$ time with $O(m)$ space,
with the help of the \emph{truncated RLE suffix array} (\emph{tRLESA}) data structure~\cite{TamakoshiGIBT15}.
A suffix $s$ of $T$ is called a tRLE suffix of $T$ if
$s = a r_i \cdots r_m$ where the first $a$ is the last character in the previous run $r_{i-1}$.
$\tRLESA(T)$ for $\rle(T) = r_1 \cdots r_m$ is an integer array of length $m$ such that $\tRLESA(T)[i] = k$ iff $a r_i \cdots r_m$ is the $k$-th lexicographically smallest tRLE suffix for $T$.
$\tRLESA$ occupies $O(m)$ space, and can be built in $O(m \log m)$ time with $O(m)$ working space~\cite{TamakoshiGIBT15}.
The details for our tRLESA-based construction algorithm for our $O(m)$-space MAW representation will appear in the full version of this paper.

An interesting open question is whether there exist other compressed representations of MAWs, based on e.g. grammar-based compression~\cite{KiefferY00}, Lempel-Ziv 77~\cite{LZ77}, and run-length Burrows-Wheeler transform~\cite{MakinenN05}.

\section*{Acknowledgments}
This work was supported by JSPS KAKENHI Grant Numbers JP20J11983 (TM), JP18K18002 (YN), JP21K17705 (YN), and by JST PRESTO Grant Number JPMJPR1922 (SI).

We thank the anonymous referees for their comments.

\clearpage

\appendix

\section{Appendix}

We give a supplemental proposition that can be useful
for analyzing the upper bound on the number of MAWs of type 4.

We begin with the following observation:
\begin{observation}
    \label{occur_kwsize}
    For any bridge substring $w \in \Sigma^*$ of $T$,
    \[
        |\Exp(w)| = \#w - \sum_{z \in \Exp(w)} \left( \#z - 1 \right) \leq \#w + |\Expplus(w)| - \sum_{z \in \Expplus(w)}\#z.
    \]
    \end{observation}

Note that $\sum_{z \in \Expplus(w)} \left( \#z - 1 \right) \leq \sum_{z \in \Exp(w)} \left( \#z - 1 \right)$ 
    since $\#z - 1 = 0$ when $z \in \Exp(w)\setminus \Expplus(w)$.
    Below we present Proposition~\ref{m4_savespace2} which gives an upper bound for $\mathcal{X}$.

    \begin{proposition}
    \label{m4_savespace2}
    For any bridge $w$ and $t \geq 1$ such that $|\Exp(w)| \geq 2$,
    \begin{eqnarray}
    |\Exp(w)| + \sum_{i=1}^{t} \sum_{z \in \Expplus^{i}(w)}|\Exp(z)| &\leq& \#w + \sum_{i=1}^{t} |\Expplus^{i}(w)|.
    \end{eqnarray}
    \end{proposition}
    \begin{proof}
    We prove this lemma by induction on $t$.
    By Observation~\ref{occur_kwsize} and $|\Exp(w)| \leq \#w$ for any $w$, we have
    \tmnote*{changed}{%
    \[
    |\Exp(w)| + \sum_{z \in \Expplus(w)}|\Exp(z)| \leq
    (\#w + |\Expplus(w)| - \sum_{z \in \Expplus(w)}\#z)  + \sum_{z \in \Expplus(w)}\#z = \#w + |\Expplus(w)|.
    \]
    }%
    Thus, the statement holds for $t=1$.
    Suppose that the statement holds for some $t' \geq 1$.
    \begin{eqnarray*}
    && |\Exp(w)| + \sum_{i=1}^{t'+1} \sum_{z \in \Expplus^{i}(w)}|\Exp(z)| \\
    &=& |\Exp(w)| + \sum_{w' \in \Expplus(w)} \left( |\Exp(w')| + \sum_{i=1}^{t'} \sum_{z \in \Expplus^{i}(w')} |\Exp(z)| \right) \\
    &\leq& |\Exp(w)| + \sum_{w' \in \Expplus(w)} \left( \#w' + \sum_{i=1}^{t'} |\Expplus^{i}(w')| \right)~\mbox{(by induction hypothesis)}\\
      &\leq& \left( \#w + |\Expplus(w)| - \sum_{w' \in \Expplus(w)}\#w' \right) +  \sum_{w' \in \Expplus(w)} \#w' + \sum_{w' \in \Expplus(w)} \sum_{i=1}^{t'} |\Expplus^{i}(w')| \\
    & & \mbox{(by Observation~\ref{occur_kwsize})}\\
    &=& \#w + |\Expplus(w)| + \sum_{w' \in \Expplus(w)} \sum_{i=1}^{t'} |\Expplus^{i}(w')| \\
    &\leq& \#w + |\Expplus(w)| + \sum_{i=2}^{t'+1} |\Expplus^{i}(w)| \\
    &=& \#w + \sum_{i=1}^{t'+1} |\Expplus^{i}(w)|
    \end{eqnarray*}
    Thus, the statement holds for $t'+1$.
    Therefore, the statement holds for any $t \geq 1$.
    \end{proof}

\bibliographystyle{abbrv}
\bibliography{ref}

\end{document}